\documentclass[11pt,letterpaper]{article}
\usepackage[T1]{fontenc}
\usepackage[utf8]{inputenc}
\usepackage{lmodern,amsmath,amssymb,amsthm}
\usepackage[textwidth=6.35in,textheight=9in]{geometry}
\usepackage{microtype,booktabs,tabularx,array}
\usepackage[hidelinks]{hyperref}
\usepackage{bookmark}
\numberwithin{equation}{section}
\newtheorem{theorem}{Theorem}[section]
\newtheorem{lemma}[theorem]{Lemma}
\newtheorem{proposition}[theorem]{Proposition}
\newtheorem{corollary}[theorem]{Corollary}
\theoremstyle{definition}
\newtheorem{algorithm}{Algorithm}
\newtheorem{assumption}[theorem]{Assumption}
\theoremstyle{remark}

\newcommand{\E}{\mathbb E}
\newcommand{\eps}{\varepsilon}
\newcommand{\ind}{\mathbf 1}

\newcommand{\EOL}{\textnormal{\textsc{EndOfALine}}}
\newcommand{\doi}[1]{\href{https://doi.org/#1}{\nolinkurl{doi:#1}}}
\newcolumntype{Y}{>{\raggedright\arraybackslash}X}
\allowdisplaybreaks[2]
\hypersetup{pdftitle={Hardness Amplification for Approximate Nash Equilibrium with Constantly Many Players},pdfsubject={Approximate Nash equilibrium, error amplification, and PPAD-ETH},pdfauthor={}}
\title{Hardness Amplification for Approximate Nash Equilibrium\\
with Constantly Many Players}
\author{}
\IfFileExists{authors.tex}{\usepackage{authblk}
\author[1,2]{Xiaotie Deng}
\author[1]{Ningyuan Li}
\author[3]{Zihe Wang}
\author[4]{Jie Zhang}
\affil[1]{Center on Frontiers of Computing Studies, School of Computer Science, Peking University\\
\texttt{liningyuan@pku.edu.cn}}
\affil[2]{Department of Computer Science, City University of Hong Kong\\
\texttt{csdeng@cityu.edu.hk}}
\affil[3]{Gaoling School of Artificial Intelligence, Renmin University of China\\
\texttt{wang.zihe@ruc.edu.cn}}
\affil[4]{Department of Computer Science, University of Bath\\
\texttt{jz2558@bath.ac.uk}}
}{}
\date{}

\begin{document}
\maketitle

\begin{abstract}


For every fixed $\eps\in(0,1)$ and $\eta\in[0,1)$, we prove that computing an $(\eps,\eta)$-weak approximate Nash equilibrium requires quasipolynomial time assuming deterministic PPAD-ETH, even in symmetric binary-payoff games with a constant number of players.
The games are given by complete payoff tables, and the player count depends only on the tolerances. 
The weak approximate Nash equilibrium allows at most an $\eta$ fraction of players to gain more than $\eps$ by deviating unilaterally. 

The main technical result is an unconditional error-amplification reduction from bimatrix games. For every fixed $\delta\in(0,1)$, it transforms a rational
bimatrix game with payoffs in $[0,1]$ into a symmetric binary-payoff
game with a polynomial-size complete table and a player count depending only on $\delta,\eps,\eta$.
Every $(\eps,\eta)$-weak approximate equilibrium of the constructed game yields a $\delta$-approximate Nash equilibrium of the bimatrix game. Construction and recovery take polynomial time in their explicit inputs.
Combining this reduction with the bimatrix hardness result of Rubinstein~[FOCS~16] gives a deterministic worst-case
running-time lower bound of $N^{(\log N)^{1-o(1)}}$, where $N$ is the
constructed table's binary encoding length.

Each player in the constructed game chooses a sequence of action pairs from the bimatrix game. 
A filtering procedure uses scores derived from the input-game payoffs to reduce total auxiliary player weight by a fixed factor in each round. 
A player receives payoff one exactly when its weight never decreases, so average payoff is small at every profile.
An analogous filter applied to a mixed profile produces distributions on action pairs whose marginals give recovery candidates. 
If every candidate fails the recovery condition, each player has a pure deviation yielding payoff one with high probability, contradicting the average-payoff bound at a weak approximate equilibrium. 
Sampling estimates independent of the number of action pairs keep the player count constant.

\end{abstract}

\section{Introduction}\label{sec:introduction}

We study additive approximate Nash equilibrium in finite normal-form
games given by complete payoff tables. With payoffs in $[0,1]$, an
$\eps$-approximate equilibrium allows each player to gain at most
$\eps$ by deviating unilaterally, so error one imposes no restriction.
For each fixed player count $p\ge2$, polynomial-time algorithms achieve
error $1-1/p$~\cite{briest-goldberg-roglin,hemon-rougemont-santha}.
This guarantee approaches one as $p$ grows. Can a single constant
error below one be achieved in polynomial time for every fixed number
of players? The absence of such a guarantee was highlighted
in~\cite[Section~1]{deligkas-fearnley-savani-spirakis}.

Under deterministic PPAD-ETH, we answer this question negatively.
For every fixed error below one, there is a constant player count for
which computing an approximate equilibrium requires deterministic
worst-case time $N^{(\log N)^{1-o(1)}}$, where $N$ is the full binary
encoding length of the payoff table.
This holds even for symmetric games with payoffs in $\{0,1\}$.
The lower bound extends the obstruction established by
Rubinstein~\cite[Theorem~1.2]{rubinstein} for bimatrix games at a
sufficiently small constant error.
It nearly matches the quasipolynomial running time of the
small-support enumeration algorithm of Lipton, Markakis, and
Mehta~\cite{lipton-markakis-mehta}, which applies whenever the player
count and positive error are fixed.

The hardness persists when only a fixed positive fraction of players
must satisfy the approximate Nash condition.
An $(\eps,\eta)$-weak approximate Nash equilibrium allows at most an
$\eta$ fraction of players to gain more than $\eps$ by a unilateral
deviation~\cite[Definition~4]{babichenko-papadimitriou-rubinstein}.
Our result holds for every fixed $\eps\in(0,1)$ and $\eta\in[0,1)$.
The player count is chosen after these tolerances are fixed, and the
action counts grow with the input.
Thus the permitted payoff gain can be arbitrarily close to one, and
the required fraction $1-\eta$ can be arbitrarily small, provided both
tolerances remain fixed.

The technical contribution is an unconditional amplification theorem.
Given a rational bimatrix game with payoffs in $[0,1]$ and a fixed
target accuracy $\delta\in(0,1)$, we construct a symmetric binary-payoff
game such that every rational $(\eps,\eta)$-weak approximate equilibrium
of the constructed game yields a $\delta$-approximate equilibrium of
the bimatrix game.
The construction uses constantly many players and a polynomial-size
complete table.
In the hardness application, $\delta$ is the small constant from
Rubinstein's theorem, while $\eps$ may be any fixed error below one.

\subsection{Equilibrium notions}\label{subsec:results}

For an integer $p\ge1$, let player $i\in\{1,\ldots,p\}$ have a finite
nonempty action set $A_i$ and a payoff function
$u_i:\prod_{j=1}^p A_j\to[0,1]$.
A mixed profile $\sigma=\bigotimes_{i=1}^p\sigma_i$ specifies independent
draws from the players' distributions $\sigma_i$ on $A_i$.
Write $u_i(\sigma)$ for player $i$'s expected payoff and
$u_i(a_i,\sigma_{-i})$ for its expected payoff when it chooses the
pure action $a_i\in A_i$ and the other players keep their strategies.
\begin{samepage}
For $\eps\ge0$, an additive $\eps$-approximate Nash equilibrium
($\eps$-ANE) satisfies
\begin{equation}\label{eq:regret}
 \max_{a_i\in A_i}u_i(a_i,\sigma_{-i})\le u_i(\sigma)+\eps
 \qquad(i=1,\ldots,p).
\end{equation}
\end{samepage}
For $\eta\in[0,1]$, an $(\eps,\eta)$-weak approximate Nash equilibrium
requires this inequality for all but at most $\lfloor\eta p\rfloor$
players~\cite[Definition~4]{babichenko-papadimitriou-rubinstein}.

The complete-table representation lists the action sets and each
player's rational payoff at every pure action profile.
An equilibrium algorithm outputs a rational mixed profile by listing
each player's probability for every action.
Running times count bit operations, including the cost of writing
this output. Throughout, the tolerances are fixed and the action
counts may grow with the input.

\subsection{Main results}\label{subsec:main-results}

\begin{theorem}[Error amplification]\label{thm:amplifier}
Fix $0<\delta<1$, $0<\eps<1$, and $0\le\eta<1$.
There is an integer $C=C(\delta,\eps,\eta)$ and a deterministic
transformation from every rational bimatrix game $G$ with payoffs in
$[0,1]$ to a symmetric $C$-player game $H$ with a common action set
and payoffs in $\{0,1\}$.
Every rational $(\eps,\eta)$-weak ANE of $H$
yields a rational $\delta$-ANE of $G$.
Construction time and the table length of $H$ are polynomial in the
encoding length of $G$.
Recovery is polynomial in that length and the actual encoding length of
the given profile.
The player count and polynomial exponents depend only on the fixed
tolerances.
\end{theorem}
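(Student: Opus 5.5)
The plan follows the outline in the abstract: build a game whose average payoff is forced to be small at every profile, and show that if no recovery candidate is a $\delta$-ANE of $G$, then every player has a pure deviation worth nearly one.

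\textbf{Construction of $H$.} Let $G=(R,S)$ have action sets $[m]\times[n]$. Fix integers $T$ (number of filtering rounds) and $k$ (sample size), and a shrink factor $\lambda<1$, all depending only on $(\delta,\eps,\eta)$; then set $C=C(\delta,\eps,\eta)$ large. A common action of $H$ is a sequence of $T$ action pairs $(x_t,y_t)\in[m]\times[n]$, so $|A|=(mn)^T$. The full table has $|A|^C=(mn)^{TC}$ entries, which is polynomial with exponent depending only on the tolerances.

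Given a pure profile, run a deterministic symmetric filter:
\begin{itemize}
\item every player starts with auxiliary weight one;
\item in round $t$, each player's pair $(x_t,y_t)$ is scored against the empirical distribution of the other players' round-$t$ pairs, via regret-type quantities such as $R_{x_t}\cdot\hat y-\hat x^{\top}R\hat y$ and $\hat x^{\top}S_{y_t}-\hat x^{\top}S\hat y$;
\item weights are reduced so that the total weight drops by a factor at least $\lambda$ per round.
\end{itemize}
A player gets payoff one exactly when its weight never decreases. Since only a player with full weight counts, and total weight after $T$ rounds is at most $\lambda^T C$, at most $\lambda^T C$ players keep full weight. Hence the average payoff is at most $\lambda^T$ at every pure profile, and by linearity at every mixed profile. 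Symmetry comes from letting the payoff depend only on the player's own action and the multiset of the others' actions.

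\textbf{Recovery and the contradiction.} Take a mixed $(\eps,\eta)$-weak ANE $\sigma$. Apply the analogous filter to $\sigma$ in expectation. In each round $t$ this yields a distribution $\pi_t$ on action pairs, namely the weighted mixture of round-$t$ pairs conditioned on survival. Its marginals $(\bar x_t,\bar y_t)$ are the recovery candidates; there are $T$ of them, each computable in time polynomial in $G$ and the encoding of $\sigma$.

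Suppose no candidate is a $\delta$-ANE of $G$. Then in every round some pure best response $x^*_t$ (or $y^*_t$) beats $(\bar x_t,\bar y_t)$ by more than $\delta$. Fix player $i$ and consider the pure deviation $a^*=((x^*_1,y^*_1),\dots)$, choosing the deficient coordinate appropriately. Its score in each round exceeds the threshold at which the filter cuts weight, so $i$ keeps full weight. This holds provided the empirical distribution of the other $C-1$ players' pairs, restricted to surviving weight, is close to $\pi_t$ on the relevant linear tests.

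\textbf{Main obstacle: concentration independent of $m,n$.} Closeness must hold for the specific test vectors $R_{x^*}$ and $S_{y^*}$. Union bounds over all $m$ or $n$ actions are unavailable, because $C$ must not depend on $m,n$. I would handle this as follows:
\begin{itemize}
\item fix the deviation $a^*$ first, from $\sigma$ alone, which is legitimate since $\sigma_{-i}$ is independent of $i$'s choice;
\item apply Hoeffding only to the $O(T)$ scalar statistics that determine whether $a^*$ survives;
\item define the filter through scores averaged over the other players, so that $i$'s survival depends on $i$'s own action and low-dimensional averages of the others. Then $\Pr[\text{survive}]\ge 1-2T e^{-\Omega(\delta^2 C)}$.
\end{itemize}
The delicate part is that survival of other players is random and correlates with the estimates across rounds. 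I would control this by having the weights depend on round-$t$ scores only through averages, and by bounding the per-round error $\delta/4$ recursively.

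\textbf{Finishing.} Each player then has a deviation worth at least $1-\gamma$, with $\gamma$ small. At least $(1-\eta)C$ players satisfy the $\eps$-condition, so each of them has $u_i(\sigma)\ge 1-\gamma-\eps$. Therefore the average payoff is at least $(1-\eta)(1-\gamma-\eps)$. Choosing $T$ with $\lambda^T<(1-\eta)(1-\eps)/2$, and then $C$ large enough to make $\gamma$ small, contradicts the average-payoff bound. So some candidate is a $\delta$-ANE, and recovery tests all $T$ candidates in polynomial time.
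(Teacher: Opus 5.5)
Your architecture matches the paper's: sequences of action pairs, a weight filter, payoff one iff the weight never drops, average payoff at most $\lambda^T$, a deviation fixed from $\sigma$ alone, and the weak-ANE payoff contradiction. The gap is the central step, that the deviator keeps full weight with probability $1-o(1)$ uniformly in $m,n$. You assert it rather than prove it, and the route you sketch would not work.

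First, ``its score exceeds the threshold at which the filter cuts weight'' needs a mechanism. Use a single score equal to the sum of the two payoffs (the paper's $k$) and deviate to both best responses. Then failure of the $\delta$-ANE test for the marginals of $\pi_t$ gives $\E_{\pi_t}[V^*-V]>\delta/2$, hence more than $\delta/4$ of $\pi_t$'s mass on reports scoring more than $\delta/4$ below the maximum. Turning this into survival requires three things. The filter must remove \emph{exactly} a $1-\lambda$ fraction of weight per round. The fraction $1-\lambda$ must be smaller than that low-scoring mass. The multiplier's transition width must be smaller than the score gap. Then any threshold high enough to touch the deviator would zero out more weight than calibration permits (Lemma~\ref{lem:unchanged-weight}). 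As written, your filter only guarantees that total weight shrinks by \emph{at least} the factor $\lambda$. That suffices for the payoff bound, but gives no reason the deviator is spared. Your parameter choices also never tie $\lambda$ to $\delta$. Finally, exact removal with finitely many players forces fractional multipliers rather than a hard cut.

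Second, survival of $a^*$ is not determined by $O(T)$ scalar statistics. Each round's threshold depends on every player's score and current weight. The empirical weights are correlated functions of the entire sample through all earlier rounds, so Hoeffding on a few averages does not apply. Nor can the threshold depend only on low-dimensional averages, because the payoff bound needs the calibrated reduction at every pure profile. Two ideas are missing:
\begin{itemize}
\item Compute the candidates by running the calibrated filter directly on the average sequence distribution $P$, not ``in expectation'' inside the empirical recursion. This gives reference weights $W_s(z)$ that are fixed functions of a single sequence. Independent-sampling bounds then apply to sums such as $\frac1C\sum_j W_{s-1}(Y_j)k(a,Y_{j,s})$. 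The score at a sampled report is handled by conditioning on that opponent's whole sequence, with no union bound over $\mathcal A$.
\item Use a Lipschitz multiplier together with a stability estimate (Lemma~\ref{lem:calibration}). It bounds the average change in updated weights by twice the prior-weight error, plus $2/h$ times the score error, plus the error in total mass. This compares updated weights directly, which is essential: the calibration function can have flat intervals, so the thresholds themselves can jump under tiny perturbations.
\end{itemize}
With this comparison, errors grow by a factor of order $2+2/(h\lambda^{s-1})$ per round. A fixed per-round budget of $\delta/4$ is therefore not self-sustaining. The accumulated error constants must be computed for all $T$ rounds, and $C$ chosen afterward to beat them.
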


For the hardness consequence, we use deterministic PPAD-ETH:
\EOL{} requires time $2^{\widetilde\Omega(n)}$ in the circuit
description length $n$.
Section~\ref{subsec:hardness} states the hypothesis and Rubinstein's
theorem explicitly and composes that theorem with the amplifier.

\begin{theorem}[Hardness below one]\label{thm:main}
Assume deterministic PPAD-ETH.
For every fixed $0<\eps<1$ and $0\le\eta<1$, there is a constant
player count $p_{\eps,\eta}$ such that computing a rational
$(\eps,\eta)$-weak approximate Nash equilibrium in
symmetric $p_{\eps,\eta}$-player binary-payoff games requires deterministic
worst-case time
\begin{equation}\label{eq:target-time}
 N^{(\log N)^{1-o(1)}},
\end{equation}
where $N$ is the full binary length of the complete payoff table.
\end{theorem}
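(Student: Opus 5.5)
The plan is a direct composition of Rubinstein's bimatrix hardness with Theorem~\ref{thm:amplifier}. First fix the constant $\delta_0\in(0,1)$ from Rubinstein's theorem (as stated in Section~\ref{subsec:hardness}). Under deterministic PPAD-ETH, computing a $\delta_0$-ANE of an $n\times n$ bimatrix game with payoffs in $[0,1]$, with encoding length $M$, requires deterministic time $M^{(\log M)^{1-o(1)}}$. Given $\eps,\eta$, set $p_{\eps,\eta}:=C(\delta_0,\eps,\eta)$ from Theorem~\ref{thm:amplifier}. Since $\delta_0$ is a fixed constant, this player count depends only on $\eps,\eta$.

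Suppose, for contradiction, that an algorithm $\mathcal A$ computes a rational $(\eps,\eta)$-weak ANE of every symmetric $p_{\eps,\eta}$-player binary-payoff game of table length $N$ in time $T(N)=N^{(\log N)^{1-\omega}}$ for some $\omega$ bounded away from $o(1)$; more precisely, suppose $T$ violates the claimed bound along some sequence. Given a bimatrix game $G$ of length $M$, the reduction proceeds in three steps:
\begin{enumerate}
\item Build $H$ in time $\mathrm{poly}(M)$, so that $N\le M^{c}$ for a constant $c=c(\eps,\eta)$.
\item Run $\mathcal A$ on $H$. Its output profile has encoding length at most $T(N)$, because writing the output is counted in the running time.
\item Apply the recovery map, which runs in time $\mathrm{poly}(M,T(N))$.
\end{enumerate}
The total time is $\mathrm{poly}(T(M^c))$. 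Now compute $(M^c)^{(\log M^c)^{1-\omega}}=M^{c\cdot c^{1-\omega}(\log M)^{1-\omega}}$. Constant factors in the exponent, and a polynomial blowup, only multiply the exponent by a constant, and a constant is absorbed by $(\log M)^{o(1)}$. So if $T$ beat $N^{(\log N)^{1-o(1)}}$, meaning $T(N)\le N^{(\log N)^{1-\gamma}}$ for a fixed $\gamma>0$ infinitely often, or whatever precise negation of the $o(1)$ form we adopt, we would obtain a bimatrix algorithm running in $M^{O((\log M)^{1-\gamma})}$. That contradicts Rubinstein's bound.

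One technical point is that the bimatrix instances in Rubinstein's reduction have rational payoffs in $[0,1]$, so Theorem~\ref{thm:amplifier} applies to them directly. A second is the direction of $N$ versus $M$. We also need $N\ge M$, or at least $\log N=\Theta(\log M)$, so that hardness is measured in $N$. If $N$ could be much smaller than $M$, the bound in $M$ would translate into a stronger bound in $N$, which is harmless. The upper bound $N\le M^c$ is what matters, since it converts time $T(N)$ into time expressed in $M$. Hard instances for every large $N$ require padding. I would handle this by noting that the claim concerns worst-case time as a function of $N$, and that the hard family of $H$'s has lengths polynomially related to $M$ across all scales. This suffices for a lower bound of the $N^{(\log N)^{1-o(1)}}$ form, read as holding infinitely often or, with padding of the table via dummy duplicated actions, for all $N$.

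The main obstacle is bookkeeping the $o(1)$ quantifiers precisely. The lower bound from Rubinstein is of the form $2^{\widetilde\Omega(n)}$ composed with a quasi-linear-size reduction. So I would unwind it into an explicit statement: for every $\gamma>0$ there is no algorithm with time $M^{(\log M)^{1-\gamma}}$. I would then verify that polynomial changes of variable $N=M^{\Theta(1)}$, and the polynomial overhead from construction and recovery, preserve this statement with the same $\gamma$. This holds because $(\log M^c)^{1-\gamma}=c^{1-\gamma}(\log M)^{1-\gamma}$ and $M^{O((\log M)^{1-\gamma})}\le M^{(\log M)^{1-\gamma/2}}$ for large $M$.
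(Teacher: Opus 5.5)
Your composition matches the paper's proof: amplify at a fixed starting accuracy, run the hypothetical solver, and bound its output length by its running time. Then recover, and note that polynomial changes in input length and running time preserve the form $N^{(\log N)^{1-o(1)}}$. Under the reading you settle on in your last paragraph, no padding is needed. That reading says that for each fixed $\gamma>0$, no algorithm runs within $N^{(\log N)^{1-\gamma}}$ on all sufficiently long inputs. A solver meeting that bound everywhere meets it in particular on the reduction's outputs.

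The step you leave unjustified is restating Rubinstein's theorem in terms of the encoding length $M$ of a rational table. The cited statement bounds time in the action count $m$, for games with payoffs in $[0,1]$. Theorem~\ref{thm:amplifier} needs rational input, and transferring the bound from $m$ to $M$ needs $M\le\mathrm{poly}(m)$, that is, polynomially bounded payoff bit lengths. Without this, a lower bound of $m^{(\log m)^{1-o(1)}}$ says nothing about time as a function of $M$.

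The paper fixes this by rounding. Choose fixed rational $\gamma,\tau>0$ with $\gamma+2\tau<\gamma_0$ (your $\delta_0$). Round the hard instances to a fixed rational grid with error at most $\tau$. Unilateral deviation gains change by at most $2\tau$, so a $\gamma$-ANE of the rounded game is a $\gamma_0$-ANE of the original, and the rounded table has $M=\Theta(m^2)$. The amplifier is then applied with $\delta=\gamma$, so the player count is $C(\gamma,\eps,\eta)$, not your $C(\gamma_0,\eps,\eta)$. An amplifier tuned to $\gamma_0$ and run on the rounded game would only return a $(\gamma_0+2\tau)$-ANE of the original game, which does not reach Rubinstein's threshold.

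You could instead keep $\delta=\gamma_0$ by arguing from inside Rubinstein's reduction. That reduction runs in time polynomial in $m$, so it writes rational payoffs of polynomial bit length. This, however, goes beyond the theorem as cited.
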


The equilibria in these statements may assign different mixed
strategies to different players, even though the games are symmetric.
Section~\ref{subsec:consequences} gives bounds on the player count
and discusses its dependence on the tolerances.

\IfFileExists{AI_disclosure.tex}{\textbf{AI Disclosure.} The authors used AI tools in preparing this manuscript. An initial proof of a weaker version of the main result was developed with substantial assistance from GPT 6 Astra. The result was subsequently strengthened and the proof simplified with substantial assistance from GPT 6.1 Sol. All AI-assisted material has been reviewed, verified, and edited by the authors, who confirm the correctness and originality of the manuscript’s content and take full responsibility for it.}{}

\subsection{Proof overview}\label{subsec:overview}

Fix a bimatrix game $G$ with action sets $A_1,A_2$, payoffs in $[0,1]$,
and target accuracy $\delta\in(0,1)$.
We amplify the permitted approximation error by creating a gap between
average payoff and the payoff attainable by unilateral deviation.
The constructed game $H$ has small average payoff at every profile.
We show that, if the recovery procedure fails to extract a $\delta$-ANE
of $G$ from a mixed profile of $H$, each player has a pure deviation
with expected payoff close to one. With suitable fixed parameters,
these two properties are incompatible with an $(\eps,\eta)$-weak ANE
of $H$.

\paragraph{Reports and empirical scores.}
Each of the $C$ players of $H$ chooses a sequence of $t$ action pairs
from $\mathcal A=A_1\times A_2$. Each pair is a \emph{report}, specifying
one action for each of the two players of $G$.
The complete sequence is one pure action of $H$, chosen before the
rounds that compute payoffs. The players make no further choices.
The player count $C$ and the sequence length $t$ will depend only on
$\delta,\eps,\eta$, independently of $|\mathcal A|$.

At a realized profile, each player starts with an auxiliary weight
of one. In each round, normalize the current weights to obtain a
distribution on that round's reports. A report's \emph{empirical score}
averages two input-game payoffs: the payoff of its first action against
the distribution of second actions, and the payoff of its second
action against the distribution of first actions.
These scores guide the weight updates, but are not the payoffs in $H$.
The weights are used to evaluate the chosen profile, not as
mixed-strategy probabilities.

\paragraph{Payoffs from repeated filtering.}
In each round, every player's weight is multiplied by a number in
$[0,1]$ determined by its empirical score and a common threshold.
The multiplier is zero at or below the threshold, rises linearly over
an interval of width $h>0$, and is one at least $h$ above the threshold.
The threshold is \emph{calibrated} so that total weight becomes exactly
a fixed fraction $q\in(0,1)$ of its previous value.
After the last round, a player receives payoff one if its weight has
never decreased, and zero otherwise.
Each player's payoff is at most its final weight, and the final weights
sum to $Cq^t$. Thus average payoff is at most $q^t$ at every pure
profile and, by taking expectations, at every mixed profile.
This is the purpose of repetition: it makes the average-payoff bound
as small as needed for the fixed tolerances $\eps,\eta$.

\paragraph{Recovering candidate strategies from a mixed profile.}
Given a mixed profile of $H$, average the players' distributions on
complete sequences to obtain a distribution $P$.
A \emph{reference filter} applies the same filtering updates directly
to this distribution. Each sequence starts with reference weight one,
and its remaining mass is its original probability $P(z)$ times its
current reference weight.
Before round $s$, group the remaining sequence masses by their
$s$-th reports and normalize, giving a distribution $Q_s$ on action
pairs. Scores against $Q_s$ determine the multipliers for that round,
and the threshold is calibrated to reduce total mass by the factor $q$.
Keeping the masses on complete sequences carries the effect of earlier
reports into later rounds without assuming independence between entries.

The two marginals of $Q_s$ specify candidate strategies for the two
players of $G$. This is why reports contain action pairs: a single
distribution on reports supplies both strategies, whose product is the
candidate equilibrium. The empirical filter determines payoffs at a
realized profile of $H$, while the reference filter computes these
candidates from the given mixed strategies.

\paragraph{Testing recovery and choosing a deviation.}
For a distribution $Q$ on action pairs with marginals
$Q^{(1)},Q^{(2)}$, the expected score of a report $a=(a_1,a_2)$ is
\[
 V_Q(a)=\tfrac12\bigl(u_1(a_1,Q^{(2)})+u_2(Q^{(1)},a_2)\bigr),
 \qquad V_Q^*=\max_{a\in\mathcal A}V_Q(a),
\]
where $u_1,u_2$ are the payoffs of $G$.
The gap $2(V_Q^*-V_Q(a))$ is the sum of the two actions' payoff
shortfalls from best responses to the opposite marginals.
Our recovery test requires $Q$ to give sufficiently high probability
to reports whose scores are sufficiently close to $V_Q^*$, with both
tolerances fixed in terms of $\delta$.
This bounds each original player's expected shortfall and certifies that
$Q^{(1)}\otimes Q^{(2)}$ is a $\delta$-ANE.
Correlation in $Q$ is allowed because each shortfall depends only on
one player's action.

If any $Q_s$ passes the test, a $\delta$-ANE of $G$ is found by returning the product of its marginals.
Suppose instead that every test fails. In each round, a probability
bounded below by a fixed positive constant lies on reports separated
from the maximum by a fixed positive score gap.
Call these reports \emph{low-scoring}.
Choose a maximizer $a_s^*\in\arg\max_a V_{Q_s}(a)$ in every round, then any player can deviate to the complete sequence
$z^*=(a_1^*,\ldots,a_t^*)$ as a legal pure action. This deviation sequence depends only on the original mixed profile and is fixed before opponents are sampled.

\paragraph{Why the deviator's weight can remain one.}
The fraction $1-q$ removed per round is chosen below the probability
guaranteed by a failed test, and the transition width $h$ below the
guaranteed score gap. Both choices leave fixed margins for sampling
errors.
Suppose that, after opponents are sampled, players whose empirical
scores are more than $h$ below the deviator's account for more than
a $1-q$ fraction of total prior weight.
A multiplier below one for the deviator would put the threshold above
all those lower scores, making their multipliers zero.
That would remove more weight than calibration permits.
The deviator's multiplier must therefore be one.
If this holds in every round, its weight never decreases and it
receives payoff one.
The remaining task is to compare reference scores and probabilities
with empirical scores and fractions of player weight after the deviation.

\paragraph{Sampling and the payoff contradiction.}
Opponents sample their sequences independently, but empirical weights
depend on all sampled sequences through earlier rounds.
In the analysis, these weights are first replaced by reference weights,
which are fixed functions of individual sequences.
Because $P$ is the average strategy distribution, the expectation of
such a function under $P$ equals the expected average of its values at
one draw from each player's strategy. Fixing the deviator's sequence
changes only one contribution to this average.
Independent-sampling estimates therefore apply to these reference-weight
sums. The multiplier's linear transition and calibration give a
stability bound for updated weights, without requiring the selected
thresholds to be close. This propagates score and weight errors
through the rounds.

For the score comparison, it is enough to estimate the score at the
fixed maximizing report and the average score error over sampled reports. A small average error bounds the number of players
with inaccurately scored reports. Since each prior weight is at most
one, it also bounds their total prior weight.
Their contributions are omitted only from the lower bound on
low-scoring weight, not from the filter.
No simultaneous score estimates over $\mathcal A$ are needed.
With the recovery tolerances, $q,h$, and $t$ fixed, the expected errors
are $O(C^{-1/2})$, with constants independent of $|\mathcal A|$.
Markov's inequality and a union bound over rounds show that each player's
deviation receives payoff one with probability at least $1-\rho$,
where $\rho=F/\sqrt C$ and $F$ depends only on those fixed quantities,
not on $C$, the profile, or $|\mathcal A|$.

We can now close the payoff-gap argument.
Fix the recovery tolerances and $q,h$ using only $\delta$. Choose
$t$ so that $q^t\le(1-\eta)(1-\eps)/2$, then choose $C$ so that
$\rho<(1-\eps)/2$.
If every recovery test failed at an $(\eps,\eta)$-weak ANE, at least
$(1-\eta)C$ players would have current expected payoff at least
$1-\rho-\eps$. Since the other payoffs are nonnegative, average payoff
would be at least
\[
 (1-\eta)(1-\rho-\eps)>q^t,
\]
contradicting the universal upper bound.
Hence some $Q_s$ passes the test and yields a $\delta$-ANE of $G$.
The choices of $C,t$ depend only on $\delta,\eps,\eta$, so the complete
payoff table has $C|\mathcal A|^{tC}$ entries and polynomial size.

\subsection{Related work}\label{subsec:related}

\paragraph{Approximation guarantees and hardness.}
Our lower bound preserves the quasipolynomial scale of the
small-support upper bounds
of~\cite{lipton-markakis-mehta,babichenko-barman-peretz}.
The hardness input is Rubinstein's bimatrix theorem at a sufficiently
small constant error~\cite[Theorem~1.2]{rubinstein}.
Earlier PPAD-completeness results concern exponentially small error
for three-player games~\cite{daskalakis-goldberg-papadimitriou} and
inverse-polynomial error for bimatrix games~\cite{chen-deng-teng},
including binary-payoff games~\cite[Theorem~6.1]{chen-teng-valiant}.
Golowich's preprint~\cite{golowich} studies the dependence of bimatrix
lower bounds on vanishing error under PCP for PPAD and ETH for PPAD.

Approximation algorithms constrain which player counts can support
hardness at a given error.
For every fixed slack $\kappa>0$, Deligkas, Fasoulakis, and
Markakis~\cite{deligkas-fasoulakis-markakis} obtain error
$1/3+\kappa$ for bimatrix games.
The preprint of Li and Li~\cite{li-li-bimatrix} reports a guarantee of \(0.30954+\kappa\) for bimatrix games, while Li, Li, and Deng~\cite{li-li-deng} obtain
$1/2+\kappa$ for three players.
The extension theorem of H\'emon, de Rougemont, and
Santha~\cite[Theorem~2]{hemon-rougemont-santha} transfers
approximation guarantees to larger fixed player counts.
Section~\ref{subsec:consequences} uses the classical $(1-1/p)$
guarantee~\cite{briest-goldberg-roglin,hemon-rougemont-santha} to
derive a necessary lower bound on the player count for the hardness result.

Polymatrix games impose the restriction that each payoff is a sum of pairwise interactions. For normalized polymatrix games, Deligkas, Fearnley, Savani, and Spirakis
\cite{deligkas-fearnley-savani-spirakis} give a polynomial-time
$(1/2+\kappa)$-ANE algorithm for every fixed $\kappa>0$.
Our conditional hardness for every fixed error below one therefore does not extend to that class.

\paragraph{Weak ANE and PCP for PPAD.}
Babichenko, Papadimitriou, and Rubinstein
\cite[Definition~4 and Theorem~2]{babichenko-papadimitriou-rubinstein}
introduced weak approximate Nash equilibrium and connected its hardness in degree-three, two-action polymatrix games to PCP for PPAD.
Rubinstein~\cite[Corollaries~1.8 and~1.9]{rubinstein} proves PPAD-hardness for weak ANE in two-action succinct games and an exponential payoff-query lower bound, at some fixed positive error and exceptional-player fraction.
The OpenAI preprint~\cite{openai-pcp-ppad} reports a proof of the PCP-for-PPAD conjecture via a quasilinear reduction, yielding weak-ANE
hardness in degree-three, two-action polymatrix games at some fixed
positive tolerances.
Our theorem covers every fixed $\eps\in(0,1)$ and $\eta\in[0,1)$
in complete-table games with constantly many players and growing
action sets.
Its hardness consequence follows from Rubinstein's bimatrix theorem
and deterministic PPAD-ETH. The amplifier itself is unconditional.

For fixed action count and fixed $\eps,\eta>0$, Arieli and
Babichenko~\cite[Theorem~1 and Corollary~1]{arieli-babichenko}
give a polynomial-time algorithm for weak ANE from complete payoff
tables, even as the player count grows.
Our hard instances instead have growing action sets.

\paragraph{Amplification and equilibrium recovery.}
Chen, Cheng, and Tang~\cite[Section~3 and Lemma~9]{chen-cheng-tang}
give a related reduction based on player replication.
They replace each player by a group and aggregate each group's
actions by plurality.
Using concentration and the removal of actions with small aggregate
probability, they recover a well-supported approximate Nash
equilibrium from an ANE of the larger game.
An $\eps$-well-supported Nash equilibrium ($\eps$-WSNE) requires every
supported action to earn within $\eps$ of a best response.
Their replica count and approximation loss depend on the action count,
and polynomial-time recovery is stated for a fixed action count.
The reduction is designed for payoff-query and succinct
representations.
Our construction amplifies the permitted error and allows exceptions
among players while keeping both the player count and the sequence
length independent of $|\mathcal A|$.
This independence makes the complete payoff table polynomial in size.

The contradiction between a bound on total payoff and high-payoff
deviations also appears in the support lower bounds of
Feder, Nazerzadeh, and Saberi~\cite[Theorem~2]{feder-nazerzadeh-saberi},
Briest, Goldberg, and R\"oglin~\cite[Section~2.1]{briest-goldberg-roglin},
and H\'emon, de Rougemont, and
Santha~\cite[Theorem~1]{hemon-rougemont-santha}.
Those arguments constrain equilibrium supports.
In our construction, the payoff bound and deviations force a
recovery test to succeed for an arbitrary weak ANE, regardless of
its support size.

\paragraph{Graphical games.}

Rubinstein~\cite[Theorem~1]{rubinstein-inapproximability} proves
constant-error ANE hardness for degree-three, two-action polymatrix
games.
Deligkas, Fearnley, Hollender, and Melissourgos
\cite{deligkas-graphical,deligkas-pure-circuit} establish the
$1/2$ threshold for ANE in two-action graphical games and conjecture
that a polynomial-time $1/2$-ANE algorithm exists for unrestricted
action sets.
Viewing our hard instances as graphical games on the complete
interaction graph, Theorem~\ref{thm:main} rules out such a deterministic
algorithm under deterministic PPAD-ETH.
The binary-payoff hardness for every fixed error below one
in~\cite{deligkas-pure-circuit} concerns WSNE.
Our corresponding hardness concerns ANE and weak ANE with growing
action sets.

\paragraph{Organization.}
Section~\ref{sec:construction} gives the scoring function, the
filtering game, and the recovery algorithm.
Section~\ref{sec:proof} proves the recovery guarantee and polynomial
bit cost.
Section~\ref{sec:complexity} derives the hardness consequences and
compares sufficient and necessary bounds on the number of players.

\section{The amplification construction}\label{sec:construction}

Fix constants $\delta,\eps\in(0,1)$ and $\eta\in[0,1)$.
From a rational bimatrix game $G$ with payoffs in $[0,1]$, we construct
a symmetric binary-payoff game $H$ with $C=C(\delta,\eps,\eta)$ players.
Every rational $(\eps,\eta)$-weak ANE of $H$ will yield a rational
$\delta$-ANE of $G$.

Players randomize independently, and payoffs are evaluated by
expectation. An $(\eps,\eta)$-weak approximate Nash equilibrium
allows at most $\lfloor\eta C\rfloor$ players to gain more than
$\eps$ by a unilateral deviation. When $\eta=0$, this is an
$\eps$-approximate Nash equilibrium ($\eps$-ANE).
It suffices to consider pure deviations by linearity.
Both games use the complete-table representation, listing every payoff
as a rational number. A rational mixed profile lists each player's
probability for every pure action. Running times count bit operations,
including the cost of writing the output.

We first give a recovery condition on a distribution over action pairs of
$G$ under which the product of its marginals is a $\delta$-ANE.
We then define the filtering game and the candidate distributions computed
from a mixed profile. Section~\ref{sec:proof} proves that at least one
candidate satisfies the condition at every weak ANE of $H$.

\subsection{A sufficient condition for equilibrium}\label{subsec:score}

Let $A_1,A_2$ be the finite nonempty action sets of $G$.
Write $\mathcal A=A_1\times A_2$ for the set of action pairs, and
$u_1,u_2:\mathcal A\to\mathbb Q\cap[0,1]$ for the payoff functions.
A distribution $Q$ on $\mathcal A$ determines candidate mixed strategies
for the two players through its marginals $Q^{(1)},Q^{(2)}$.
We seek a condition certifying that their product is an approximate
equilibrium.
For action pairs $a=(a_1,a_2),b=(b_1,b_2)\in\mathcal A$, define
the \emph{scoring function}
\begin{equation}\label{eq:pair-score}
 k(a,b)=\frac{u_1(a_1,b_2)+u_2(b_1,a_2)}2.
\end{equation}
For a probability distribution $Q$ on $\mathcal A$, define the \emph{expected
score} and its maximum by
\begin{equation}\label{eq:score}
 V_Q(a)=\E_{b\sim Q}k(a,b),\qquad V_Q^*=\max_{a\in \mathcal A}V_Q(a).
\end{equation}
Taking the expectation in \eqref{eq:pair-score} gives
\begin{equation}\label{eq:separated-score}
 V_Q(a)=\tfrac12\bigl(u_1(a_1,Q^{(2)})+u_2(Q^{(1)},a_2)\bigr).
\end{equation}
For fixed $Q$, each summand depends on only one coordinate of $a$.
This separation gives the following recovery condition.

\begin{lemma}[Marginal recovery]\label{lem:decode}
Let $0<\beta,\Delta<1$.
If a probability distribution $Q$ on $\mathcal A$ satisfies
\begin{equation}\label{eq:good-distribution}
 Q\{a:V_Q^*-V_Q(a)\le\Delta\}\ge1-\beta,
\end{equation}
then $Q^{(1)}\otimes Q^{(2)}$ is a $(\beta+2\Delta)$-ANE of $G$.
\end{lemma}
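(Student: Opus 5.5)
The plan is to use the separation \eqref{eq:separated-score} to split the score gap into the two players' regrets, and then average those regrets over $Q$. For $a=(a_1,a_2)$ define the regrets
\[
r_1(a_1)=\max_{b_1\in A_1}u_1(b_1,Q^{(2)})-u_1(a_1,Q^{(2)}),\qquad r_2(a_2)=\max_{b_2\in A_2}u_2(Q^{(1)},b_2)-u_2(Q^{(1)},a_2).
\]
Both are nonnegative and lie in $[0,1]$, since payoffs lie in $[0,1]$. Each summand of \eqref{eq:separated-score} depends on one coordinate only, so the maximum over $a$ splits:
\[
V_Q^*=\tfrac12\Bigl(\max_{b_1}u_1(b_1,Q^{(2)})+\max_{b_2}u_2(Q^{(1)},b_2)\Bigr).
\]
Hence $2\bigl(V_Q^*-V_Q(a)\bigr)=r_1(a_1)+r_2(a_2)$. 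On the good set $S=\{a:V_Q^*-V_Q(a)\le\Delta\}$ we get $r_1(a_1)+r_2(a_2)\le 2\Delta$. Because both regrets are nonnegative, each one individually is at most $2\Delta$ on $S$.

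Next, compute player $1$'s regret in the product profile $Q^{(1)}\otimes Q^{(2)}$. Player $1$'s payoff depends only on its own action and on player $2$'s marginal $Q^{(2)}$. Therefore $u_1(Q^{(1)},Q^{(2)})=\E_{a\sim Q}u_1(a_1,Q^{(2)})$, where the expectation is over the correlated $Q$; this is the step where correlation in $Q$ does no harm. It follows that the regret equals $\E_{a\sim Q}r_1(a_1)$. Split this expectation into $S$ and its complement. The part over $S$ is at most $2\Delta$, and the part over the complement is at most $Q(\mathcal A\setminus S)\cdot 1\le\beta$. So player $1$'s regret is at most $\beta+2\Delta$. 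The symmetric argument with $r_2$ handles player $2$. By \eqref{eq:regret}, this is exactly the $(\beta+2\Delta)$-ANE condition.

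There is no real obstacle here. The only points needing care are the identity $u_1(Q^{(1)},Q^{(2)})=\E_{a\sim Q}u_1(a_1,Q^{(2)})$, which is linearity applied to the marginal, and the observation that the maximum in $V_Q^*$ decomposes coordinatewise.
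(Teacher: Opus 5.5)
Your proof is correct and follows essentially the same route as the paper. Both arguments split $2(V_Q^*-V_Q(a))$ into the two coordinatewise shortfalls, bound each by $2\Delta$ on the good set and by one off it, and identify the $Q$-expectation of each shortfall with that player's regret at $Q^{(1)}\otimes Q^{(2)}$.
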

\begin{proof}
Define the payoff shortfalls against the opposite marginals by
\[
 \begin{aligned}
 d_1(a_1)&=\max_{c_1\in A_1}u_1(c_1,Q^{(2)})-u_1(a_1,Q^{(2)}),\\
 d_2(a_2)&=\max_{c_2\in A_2}u_2(Q^{(1)},c_2)-u_2(Q^{(1)},a_2).
 \end{aligned}
\]
Both shortfalls lie in $[0,1]$.
Because $\mathcal A=A_1\times A_2$, the two summands in
\eqref{eq:separated-score} can be maximized independently. Hence
\[
 d_1(a_1)+d_2(a_2)=2(V_Q^*-V_Q(a)).
\]
Let $S=\{a:V_Q^*-V_Q(a)\le\Delta\}$.
Nonnegativity of the shortfalls implies that each is at most
$2\Delta$ on $S$, and each is at most one on its complement.
For $i=1,2$, the recovery condition therefore gives
\[
 \E_Q d_i(a_i)
 \le 2\Delta\,Q(S)+Q(\mathcal A\setminus S)
 \le 2\Delta+\beta.
\]
Since $d_i$ depends only on the $i$-th coordinate, its expectation
under $Q$ equals its expectation under $Q^{(i)}$. In particular,
\[
 \begin{aligned}
 \E_Q d_1(a_1)
   &=\max_{c_1\in A_1}u_1(c_1,Q^{(2)})
       -u_1(Q^{(1)},Q^{(2)}),\\
 \E_Q d_2(a_2)
   &=\max_{c_2\in A_2}u_2(Q^{(1)},c_2)
       -u_2(Q^{(1)},Q^{(2)}).
 \end{aligned}
\]
These are exactly the largest unilateral payoff improvements at
$Q^{(1)}\otimes Q^{(2)}$. Each is at most $\beta+2\Delta$,
so both approximate Nash inequalities hold.
\end{proof}

We call \eqref{eq:good-distribution} the \emph{recovery condition}.
Choosing rational $\beta,\Delta$ with $\beta+2\Delta<\delta$
makes it sufficient for a $\delta$-ANE of $G$.
Failure of the recovery condition gives more than $\beta$ probability
to action pairs whose scores lie more than $\Delta$ below the maximum.
The filtering argument uses both this probability bound and this score gap.

\subsection{The filtering game}\label{subsec:game}

For the general construction, let $\mathcal A$ be a finite nonempty set
whose elements we call \emph{reports}, and let
$k:\mathcal A\times\mathcal A\to\mathbb Q\cap[0,1]$ be a
scoring function given by its full table.
Define expected scores by \eqref{eq:score}.
Each player's pure action will be a complete sequence of reports.

The following theorem gives the general filtering result underlying
Theorem~\ref{thm:amplifier}. Its two claims separate construction from
recovery for an arbitrary scoring function.
Theorem~\ref{thm:amplifier} follows by specializing $k$ to
\eqref{eq:pair-score} and applying Lemma~\ref{lem:decode} to the
recovered distribution.

\begin{theorem}[General filtering]\label{thm:filter}
Fix rational $0<\beta,\Delta<1$ and tolerances
$0<\eps<1$, $0\le\eta<1$.
There are integers $t,C$, depending only on these parameters, for which
the following holds for every rational scoring function
$k:\mathcal A\times\mathcal A\to[0,1]$ on a finite nonempty set $\mathcal A$.
\begin{enumerate}
\item \emph{Construction.}
The construction below defines a symmetric $C$-player game $H$
with common action set $\mathcal A^t$ and binary payoffs.
Its complete payoff table has $C|\mathcal A|^{tC}$ entries and can be
constructed deterministically in time polynomial in the encoding
length of the table of $k$.
\item \emph{Recovery.}
Given any rational $(\eps,\eta)$-weak ANE of $H$,
Algorithm~\ref{alg:recovery} returns a rational distribution on
$\mathcal A$ satisfying \eqref{eq:good-distribution}.
Its running time is polynomial in the encoding length of the table
of $k$ and the actual encoding length of the given profile.
\end{enumerate}
Both running times count bit operations, and the polynomial exponents
depend only on the fixed parameters.
\end{theorem}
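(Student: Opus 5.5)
The plan follows the overview of Section~\ref{subsec:overview}, in four steps. The parameters are fixed first.

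\textbf{Parameters and filter.} Choose rational $q$ with $1-q<\beta$, say $1-q=\beta/2$. Choose $h=\Delta/4$, which leaves margins of order $\beta/2$ and $\Delta/2$ for sampling. Choose $t$ with $q^t\le(1-\eta)(1-\eps)/2$. The player count $C$ is chosen last, so that $\rho=F/\sqrt C<(1-\eps)/2$.

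The filter update is the one described in the overview: the multiplier is $m(x)=\min\{1,\max\{0,(x-\theta)/h\}\}$. At every pure profile the total multiplied weight is continuous and nonincreasing in $\theta$, so a calibrated threshold exists. Because $m$ is piecewise linear, $\theta$ can be computed exactly as a rational by sorting the scores and solving one linear equation. Payoff one is given when all multipliers equal one.

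\textbf{Construction claim.} The table has $C|\mathcal A|^{tC}$ entries, and each entry is determined by $t$ rounds of $O(C^2)$ score and weight arithmetic. Since $t,C$ are constants, the bit lengths grow only by constant factors in the input bit length. Symmetry holds because the rule is permutation-invariant. The bound ``payoff $\le$ final weight'' and the identity $\sum_i w_i=Cq^t$ give average payoff at most $q^t$ at every profile.

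\textbf{Recovery algorithm.} Algorithm~\ref{alg:recovery} averages the players' strategies into $P$. It then runs the reference filter on the masses $P(z)w(z)$, which determines $Q_1,\dots,Q_t$. Each calibration is again exact and rational, and the support has size at most the sum of the support sizes of the players' strategies. The algorithm tests \eqref{eq:good-distribution} for each $Q_s$ and returns the first one that passes. This costs polynomial time in the profile and table lengths with constant exponents.

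\textbf{Correctness.} Suppose every $Q_s$ fails, and let $z^*=(a_s^*)$ be the sequence of round maximizers. Fix a player $i$ who deviates to $z^*$. Show by induction on $s$ that, with high probability, the empirical normalized weights stay close in $\ell_1$ to the reference ones. The comparison quantity is
\[
 \frac1C\Bigl(\sum_{j\ne i}w^{\mathrm{ref}}_s(z_j)+w^{\mathrm{ref}}_s(z^*)\Bigr).
\]
Its expectation matches the reference mass up to $O(1/C)$ because $P$ is the average, and the $z_j$ are independent with bounded summands. Chebyshev's inequality therefore gives deviations of order $O(C^{-1/2})$.

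Two stability facts drive the induction. First, perturbing the weights by $\gamma$ in $\ell_1$ and the scores by $\gamma'$ on average perturbs the calibrated updated weights by $O(\gamma+\gamma'/h)$. This follows from $m$ being $1/h$-Lipschitz and from calibration fixing the total mass. Second, empirical scores are averages of $k(a,z_{j,s})$ against the empirical distribution. Their error at the fixed report $a_s^*$, and their weighted average error over the sampled reports, are both $O(C^{-1/2})$ in expectation. These are second-moment bounds for a fixed test point, and they do not depend on $|\mathcal A|$.

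Given these bounds, failure of the test implies the following in each round. Players scoring at least $\Delta-O(\text{err})>h$ below the deviator carry more than $\beta-O(\text{err})>1-q$ of the weight, after discarding the badly scored players via Markov's inequality. Calibration then forces the deviator's multiplier to be one.

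A union bound over the $t$ rounds gives payoff one with probability at least $1-\rho$. At an $(\eps,\eta)$-weak ANE this yields average payoff at least $(1-\eta)(1-\rho-\eps)>q^t$, which is a contradiction.

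\textbf{Main obstacle.} I expect the hardest step to be the stability propagation across rounds. The empirical weights depend jointly on all samples, so they must be replaced by the sample-wise reference weights before any concentration bound is applied. The cost of that replacement must then be bounded by a deterministic Lipschitz argument that never compares the two thresholds directly. I would handle this by comparing the two filters' outputs at a common threshold and then using monotonicity of calibrated mass in $\theta$.
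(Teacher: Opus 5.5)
Your proposal is correct and follows essentially the same route as the paper. It uses the same parameters ($1-q=\beta/2$, $h=\Delta/4$, $q^t\le(1-\eta)(1-\eps)/2$, $C$ chosen last), the budget bound from payoff $\le$ final weight, and the reference filter on complete-sequence masses $P(z)W_s(z)$, with empirical weights replaced by reference weights before any concentration. The stability step is the same common-threshold-plus-monotonicity argument, which in the paper carries an extra sampling term $|C^{-1}\sum_j W_s(Y_j)-q^s|$ that your comparison quantity supplies. The sampled-report score bounds are fixed-test-point second-moment bounds, made rigorous by conditioning on the whole sequence $Y_j$. The ending also matches: Markov discarding of badly scored players, calibration forcing the deviator's multiplier to one, and a union bound over rounds. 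The only cosmetic difference is that the paper propagates errors in expectation ($\E r_s\le R_s/\sqrt C$, $\E e_s\le E_s/\sqrt C$) and applies Markov once per round at the end, rather than carrying high-probability events through the induction.
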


\paragraph{Proof structure.}
The remainder of this subsection defines the game and establishes
the bound on average payoff and symmetry.
Section~\ref{subsec:comparison} defines Algorithm~\ref{alg:recovery}.
The recovery argument following Lemma~\ref{lem:survival} shows that
the algorithm succeeds at every weak ANE.
Section~\ref{subsec:survival} proves that lemma using the sampling
estimates of Section~\ref{subsec:estimates}.
Section~\ref{subsec:exact} supplies the exact computations and bit-cost
bounds, then assembles these ingredients in a proof of
Theorem~\ref{thm:filter}. The proof of Theorem~\ref{thm:amplifier}
immediately afterward composes this result with marginal recovery.

The construction bounds average payoff at every profile.
If all recovery candidates fail, the deviation lemma will show that
each player has a pure deviation whose weight remains one through
every round with high probability. The payoff rule assigns payoff one
whenever this happens. At a weak equilibrium, the approximate Nash
inequalities for the required fraction of players would therefore
force average payoff above the construction's bound.

\paragraph{Parameters.}
Choose the fraction of total weight remaining per round, $q$, the
transition width $h$, and the number of rounds $t$ by
\begin{equation}\label{eq:parameters}
 q=1-\beta/2,\qquad h=\Delta/4,\qquad
 t=\min\{s\in\mathbb Z_{\ge1}:q^s\le(1-\eta)(1-\eps)/2\}.
\end{equation}
The parameters serve the following roles.
\begin{center}
\begin{tabularx}{\textwidth}{@{}lY@{}}
\toprule
Parameter & Role in the argument\\
\midrule
$\beta$ & Probability of low-scoring reports exceeds this value when the recovery condition fails.\\
$\Delta$ & Score gap separating those reports from a maximizer.\\
$1-q=\beta/2$ & Fraction of total weight removed per round, smaller than $\beta$.\\
$h=\Delta/4$ & Transition width leaving room for score errors.\\
$t$ & Number of rounds needed for a sufficiently small payoff bound.\\
$C$ & Number of players needed to control the sampling errors.\\
\bottomrule
\end{tabularx}
\end{center}
Calibration will ensure that the total prior player weight in round
$s$ is $Cq^{s-1}$, for $s=1,\ldots,t$.
The rules below define a game for every integer $C\ge1$.
For the recovery guarantee, \eqref{eq:population-choice} chooses $C$
using the sampling estimates proved in Section~\ref{sec:proof}.
That choice depends only on $\beta,\Delta,\eps,\eta$, independently
of $|\mathcal A|$.

\paragraph{Weight multipliers.}
These weights are auxiliary quantities for computing payoffs. They
are not mixed-strategy probabilities, and a zero weight does not remove
a player from the game.
For a score $v\in[0,1]$ and threshold $\theta\in\mathbb R$, define
the \emph{weight multiplier}
\begin{equation}\label{eq:filter-factor}
 \phi(v,\theta)=
 \begin{cases}
 0,&v\le\theta,\\
 (v-\theta)/h,&\theta<v<\theta+h,\\
 1,&v\ge\theta+h.
 \end{cases}
\end{equation}
The function $\phi$ is $1/h$-Lipschitz in $v$ and continuous and nonincreasing
in $\theta$.

\paragraph{Payoff evaluation.}
The players simultaneously choose sequences
$y_i=(y_{i,1},\ldots,y_{i,t})\in\mathcal A^t$.
The $t$ rounds below only evaluate the chosen pure profile
$y=(y_1,\ldots,y_C)$, using report $y_{i,s}$ for player $i$ in round $s$.
Write $\widehat W_{i,s}$ for player $i$'s weight after round $s$, and
set $\widehat W_{i,0}=1$.
Hats denote quantities computed from this realized profile.
In round $s$, define each report's \emph{empirical score} by normalizing
the weighted round-$s$ reports by the total prior weight $Cq^{s-1}$:
\begin{equation}\label{eq:empirical-score}
 \widehat V_s(a)=\frac1{Cq^{s-1}}\sum_{j=1}^C
              \widehat W_{j,s-1}k(a,y_{j,s}).
\end{equation}
Choose the smallest threshold $\widehat\theta_s\in[-h,1]$ satisfying
\begin{equation}\label{eq:empirical-calibration}
 \frac1C\sum_j\widehat W_{j,s-1}
       \phi(\widehat V_s(y_{j,s}),\widehat\theta_s)=q^s,
\end{equation}
and set
\begin{equation}\label{eq:empirical-update}
 \widehat W_{i,s}=\widehat W_{i,s-1}
       \phi(\widehat V_s(y_{i,s}),\widehat\theta_s).
\end{equation}
We call the threshold choice \emph{calibration}: it reduces total
weight to exactly $q$ times its previous value.
The binary payoff function $U_i:(\mathcal A^t)^C\to\{0,1\}$ is
\begin{equation}\label{eq:binary-payoff}
 U_i(y)=\ind_{\{\widehat W_{i,t}=1\}}.
\end{equation}
Equivalently, player $i$ receives payoff one exactly when all of its
weight multipliers equal one.
For a mixed profile $\sigma$, write
$U_i(\sigma)=\E_{y\sim\sigma}U_i(y)$.

\begin{lemma}[Threshold existence and total payoff]\label{lem:budget}
All thresholds exist, and every pure profile satisfies, for $s=0,\ldots,t$,
\begin{equation}\label{eq:empirical-mass}
 0\le\widehat W_{i,s}\le1,
 \qquad \frac1C\sum_i\widehat W_{i,s}=q^s.
\end{equation}
Consequently every mixed profile $\sigma$ satisfies
\begin{equation}\label{eq:budget}
 \frac1C\sum_i U_i(\sigma)\le q^t.
\end{equation}
\end{lemma}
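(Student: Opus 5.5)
We argue by induction on $s$, proving simultaneously that $\widehat\theta_s$ exists and that \eqref{eq:empirical-mass} holds. The base case $s=0$ is immediate from $\widehat W_{i,0}=1$, which gives weights in $[0,1]$ and total $\frac1C\sum_i\widehat W_{i,0}=1=q^0$.

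For the inductive step, fix $s\ge1$ and assume $0\le\widehat W_{j,s-1}\le1$ with $\frac1C\sum_j\widehat W_{j,s-1}=q^{s-1}$. Because the prior weights are nonnegative and sum to $Cq^{s-1}$, each empirical score $\widehat V_s(a)$ in \eqref{eq:empirical-score} is a convex combination of values of $k$, so it lies in $[0,1]$. Consider the function
\[
 g(\theta)=\frac1C\sum_j\widehat W_{j,s-1}\,\phi\bigl(\widehat V_s(y_{j,s}),\theta\bigr)
\]
on $[-h,1]$. Continuity and monotonicity of $\phi$ in $\theta$ make $g$ continuous and nonincreasing. At $\theta=-h$, every score $v\ge0=\theta+h$ gives $\phi=1$, so $g(-h)=q^{s-1}$. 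At $\theta=1$, every score $v\le1=\theta$ gives $\phi=0$, so $g(1)=0$.

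Since $0<q^s<q^{s-1}$, the intermediate value theorem gives some $\theta\in[-h,1]$ with $g(\theta)=q^s$. The preimage $g^{-1}(q^s)\cap[-h,1]$ is closed and nonempty, so it has a smallest element, and this is $\widehat\theta_s$. The update \eqref{eq:empirical-update} multiplies each weight by a factor in $[0,1]$, which preserves $0\le\widehat W_{i,s}\le1$. The calibration \eqref{eq:empirical-calibration} says exactly that $\frac1C\sum_i\widehat W_{i,s}=q^s$, completing the induction.

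For \eqref{eq:budget}, fix a pure profile $y$. Since $\widehat W_{i,t}\in[0,1]$, the indicator satisfies $U_i(y)=\ind_{\{\widehat W_{i,t}=1\}}\le\widehat W_{i,t}$. Averaging over $i$ gives $\frac1C\sum_iU_i(y)\le q^t$ for every pure profile. Taking expectation over $y\sim\sigma$ and using linearity extends this bound to every mixed profile.

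The only delicate point is existence of the \emph{smallest} threshold, and closedness of the level set of the continuous function $g$ on a compact interval settles it. The rest is routine.
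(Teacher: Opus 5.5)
Your proof is correct and follows the paper's argument essentially step for step. Both run a joint induction on threshold existence and the weight bounds, evaluate the calibration function at the endpoints $-h$ and $1$, invoke continuity, monotonicity and closedness of the solution set to get the smallest root, and finish with $\ind_{\{w=1\}}\le w$ averaged over players and then over $y\sim\sigma$.
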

\begin{proof}
We prove threshold existence and the weight bounds together by induction.
Initially $\widehat W_{i,0}=1$, so the weights lie in $[0,1]$ and sum to $C$.
Suppose the claimed bounds hold after round $s-1$.
The prior weights then sum to $Cq^{s-1}$, so
\eqref{eq:empirical-score} is a weighted average of values of $k$.
In particular, every empirical score lies in $[0,1]$.

For $v\in[0,1]$, we have $\phi(v,-h)=1$ and $\phi(v,1)=0$.
The left side of \eqref{eq:empirical-calibration} is continuous and
nonincreasing in the threshold, equals $q^{s-1}$ at $-h$, and equals zero
at $1$.
Since $0<q^s<q^{s-1}$, a solution exists. The solution set is a nonempty
closed subset of $[-h,1]$, so it has a smallest element.
The update multiplies each prior weight by a number in $[0,1]$.
It therefore preserves the weight bounds, and calibration gives total
weight $Cq^s$, completing the induction.

Finally, $\ind_{\{w=1\}}\le w$ for $w\in[0,1]$. Thus every pure profile
$y$ satisfies
\[
 \frac1C\sum_i U_i(y)
 =\frac1C\sum_i\ind_{\{\widehat W_{i,t}=1\}}
 \le\frac1C\sum_i\widehat W_{i,t}=q^t.
\]
Taking expectations proves \eqref{eq:budget}.
\end{proof}

\paragraph{Symmetry.}
The players have the same action set and use the same filtering rule.
For a permutation $\pi$ of $\{1,\ldots,C\}$, put
$(\pi y)_i=y_{\pi^{-1}(i)}$.
All initial weights equal one. Permuting the players preserves the
score sums and calibration function, including its smallest root,
so the updated weights permute in the same way.
Induction gives $U_i(\pi y)=U_{\pi^{-1}(i)}(y)$, proving symmetry.
This holds for arbitrary scoring functions $k$.

Calibration also gives a sufficient condition for one player's weight
to remain unchanged in a single round. The condition compares the
prior weight on sufficiently lower-scoring reports with the total
weight that the round removes.

\begin{lemma}[A sufficient condition for unchanged weight]\label{lem:unchanged-weight}
In one round, let $w_j\in[0,1]$ and $v_j\in[0,1]$ be player $j$'s
prior weight and report score, for $j=1,\ldots,C$.
Put $M=\sum_{j=1}^Cw_j>0$, and let $\theta\in\mathbb R$ satisfy the
calibration equation
\[
 \sum_{j=1}^C w_j\phi(v_j,\theta)=qM.
\]
For any player $i$, if
\[
 \sum_{j:\,v_j<v_i-h}w_j>(1-q)M,
\]
then $\phi(v_i,\theta)=1$. In particular, player $i$'s updated weight
equals its prior weight $w_i$.
\end{lemma}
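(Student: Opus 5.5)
The plan is a proof by contradiction using only the definition of $\phi$ and the calibration equation. Suppose $\phi(v_i,\theta)<1$. By the definition \eqref{eq:filter-factor}, a multiplier below one means $v_i<\theta+h$, that is, $\theta>v_i-h$. The hypothesis concerns exactly the players with $v_j<v_i-h$, and for each of them this gives $v_j<v_i-h<\theta$. Hence $v_j\le\theta$ and $\phi(v_j,\theta)=0$.

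Next we bound the calibrated total from above. Write $L=\{j:v_j<v_i-h\}$. Every multiplier lies in $[0,1]$, and the multipliers on $L$ vanish, so
\[
 \sum_{j=1}^C w_j\phi(v_j,\theta)\le\sum_{j\notin L}w_j
 =M-\sum_{j\in L}w_j<M-(1-q)M=qM .
\]
The strict inequality uses the hypothesis. This contradicts the calibration equation $\sum_j w_j\phi(v_j,\theta)=qM$, so $\phi(v_i,\theta)=1$. The updated weight is then $w_i\phi(v_i,\theta)=w_i$, which is the final claim.

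There is no real obstacle. The only point needing care is the boundary behaviour of $\phi$. We need $\phi<1$ to force $v_i<\theta+h$ strictly, which holds because $\phi(v,\theta)=1$ whenever $v\ge\theta+h$. We need $v_j<\theta$ to force $\phi(v_j,\theta)=0$, which holds since $\phi=0$ on $v\le\theta$. We also use $w_j\ge0$ to drop the terms in $L$ and $w_j\le1$ is not even needed. The assumption $M>0$ matters only because it makes the hypothesis meaningful. Note that the argument never uses that $\theta$ is the smallest calibrating threshold, so it applies to any solution of the calibration equation.
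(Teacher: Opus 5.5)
Your proof is correct and follows essentially the same argument as the paper: assume $\phi(v_i,\theta)<1$, deduce $\theta>v_i-h$, conclude that every player with $v_j<v_i-h$ has multiplier zero, and contradict the calibration equation. The only cosmetic difference is that you bound the retained weight above by $\sum_{j\notin L}w_j<qM$. The paper instead bounds the removed weight below by $\sum_{j\in L}w_j>(1-q)M$, which is the same inequality.
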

\begin{proof}
Suppose $\phi(v_i,\theta)<1$. By \eqref{eq:filter-factor},
$\theta>v_i-h$. Every player $j$ with $v_j<v_i-h$ then has
$v_j<\theta$ and hence multiplier zero.
The total weight removed would therefore satisfy
\[
 \sum_{j=1}^Cw_j\bigl(1-\phi(v_j,\theta)\bigr)
 \ge \sum_{j:\,v_j<v_i-h}w_j
 > (1-q)M.
\]
Calibration requires this total to equal $(1-q)M$, a contradiction.
\end{proof}
Section~\ref{subsec:survival} will establish this condition in every
round of a suitable pure deviation, with high probability if all recovery
tests fail. Since the deviator starts with weight one, it then has
final weight one and receives payoff one.

\subsection{Distributions used for recovery}\label{subsec:comparison}

The game is now defined. To recover from a mixed profile, we use a
separate deterministic procedure, the \emph{reference filter}, to
produce candidate distributions $Q_1,\ldots,Q_t$ on reports.
This procedure is analogous to the empirical filter: it uses the same
weight multiplier $\phi$ and reduction factor $q$, but applies them
to the masses of complete sequences under the average strategy
distribution. It computes its own scores and calibrated thresholds
from these masses. The resulting reference weights and scores also
serve as fixed quantities in the sampling analysis.

\paragraph{The average sequence distribution.}
Let each player $j$ use a strategy $\sigma_j$ on $\mathcal A^t$, and
write $\sigma=\bigotimes_{j=1}^C\sigma_j$ for their product profile on
$(\mathcal A^t)^C$.
Define the average distribution on complete sequences by
\begin{equation}\label{eq:pooled-distribution}
 P=\frac1C\sum_{j=1}^C\sigma_j.
\end{equation}
To sample from $P$, choose a player uniformly and draw a sequence from
that player's strategy.
For any fixed function of one sequence, its expectation under $P$
equals the expected average of its values at one draw from each
player's strategy. This identity is the starting point for comparing
reference quantities with the empirical quantities from sampled
player sequences.

\paragraph{Reference weights and scores.}
Algorithm~\ref{alg:recovery} defines reference weights
$W_s:\mathcal A^t\to[0,1]$ for $s=0,\ldots,t$.
After round $s$, the remaining, unnormalized probability mass assigned
to $z$ is $P(z)W_s(z)$. The weight $W_s(z)$ itself is not a probability.
In round $s$, the candidate $Q_s$ is obtained by grouping the prior
sequence masses by their round-$s$ reports and normalizing.
The algorithm constructs all $t$ reference rounds before testing the
candidates, so every reference quantity used in the analysis is defined.

\begin{algorithm}[Reference filtering and recovery]\label{alg:recovery}
\textbf{Input.} The scoring-function table $k$, a mixed profile
$\sigma=\bigotimes_{j=1}^C\sigma_j$ on $(\mathcal A^t)^C$, and the
fixed parameters $\beta,\Delta,q,h,t$ from
Theorem~\ref{thm:filter} and \eqref{eq:parameters}.

\noindent\textbf{Output.} A distribution on $\mathcal A$ satisfying
\eqref{eq:good-distribution}, or failure.
\begin{enumerate}
\item Form $P$ by \eqref{eq:pooled-distribution} and set $W_0(z)=1$
for every $z\in\mathcal A^t$.
\item For each $s=1,\ldots,t$, perform the following updates.
\begin{enumerate}
\item Compute the candidate distribution and its expected scores:
\begin{equation}\label{eq:deterministic-distribution}
 Q_s(a)=\frac{\E_{z\sim P}
       [W_{s-1}(z)\ind_{\{z_s=a\}}]}{q^{s-1}},
 \qquad V_s=V_{Q_s}.
\end{equation}
\item Choose the smallest $\theta_s\in[-h,1]$ satisfying
\begin{equation}\label{eq:deterministic-calibration}
 \E_P W_{s-1}(z)\phi(V_s(z_s),\theta_s)=q^s.
\end{equation}
\item For every $z\in\mathcal A^t$, set
\begin{equation}\label{eq:deterministic-update}
 W_s(z)=W_{s-1}(z)\phi(V_s(z_s),\theta_s).
\end{equation}
\end{enumerate}
\item For $s=1,\ldots,t$ in order, compute
$V_s^*=\max_{a\in\mathcal A}V_s(a)$.
If $Q_s\{a:V_s^*-V_s(a)\le\Delta\}\ge1-\beta$, return $Q_s$ and stop.
\item Return failure.
\end{enumerate}
\end{algorithm}

The normalization in \eqref{eq:deterministic-distribution} follows
from calibration. Initially $\E_PW_0=1$.
If $\E_PW_{s-1}=q^{s-1}$, the numerators defining $Q_s$ sum to
$q^{s-1}$, so $Q_s$ is a probability distribution and $V_s$ takes
values in $[0,1]$. The endpoint argument in Lemma~\ref{lem:budget},
now applied to sums weighted by $P(z)$, gives a smallest calibrated
threshold. The update keeps every weight in $[0,1]$ and gives
\begin{equation}\label{eq:deterministic-mass}
 \E_PW_s=q^s.
\end{equation}
Induction therefore proves that every round is well defined.

The reference filter is computed directly from $P$, not by taking
expectations inside the empirical recursion.
At a sampled profile $Y$, the reference weight $W_s(Y_j)$ need not
equal the empirical weight $\widehat W_{j,s}$. The scores and
thresholds can differ as well. Section~\ref{subsec:estimates} compares
the weights and scores produced by the two procedures.
Their inputs, computations, and roles are summarized below:
\begin{center}
\begin{tabularx}{\textwidth}{@{}lYY@{}}
\toprule
& Empirical filter & Reference filter\\
\midrule
Input & One realized sequence from each player.
      & The original average sequence distribution $P$.\\
Weights & $\widehat W_{j,s}$ for each player.
        & $W_s(z)$, giving sequence mass $P(z)W_s(z)$.\\
Scores and thresholds & $\widehat V_s$ and $\widehat\theta_s$, computed from prior player weights.
        & $V_s$ and $\theta_s$, computed from prior sequence masses.\\
Purpose & Determines the game's payoffs.
        & Gives recovery candidates and fixed functions for the sampling estimates.\\
\bottomrule
\end{tabularx}
\end{center}
Reweighting complete sequences allows earlier reports to affect the
mass assigned to later reports without assuming coordinate independence.
The weight $W_s(z)$ is defined even when $P(z)=0$, so the reference
filter can also be evaluated at sequences outside the support of $P$.

\paragraph{Recovery guarantee.}
The recovery argument at the start of Section~\ref{sec:proof} proves
that Algorithm~\ref{alg:recovery} succeeds at every
$(\eps,\eta)$-weak ANE for the chosen player count.
Section~\ref{subsec:exact} shows that, for a rational profile, the
algorithm uses exact rational arithmetic with polynomial bit cost.
For the bimatrix scoring function \eqref{eq:pair-score},
Lemma~\ref{lem:decode} then gives a $\delta$-ANE of $G$ by taking the
product of the recovered distribution's two marginals.

\section{Proof of amplification}\label{sec:proof}

The main step is to show that failure of Algorithm~\ref{alg:recovery}
gives every player a pure deviation with expected payoff close to one.
To state this step, fix a mixed profile $\sigma$ and the reference quantities computed
by Algorithm~\ref{alg:recovery} on this profile.
Suppose the algorithm returns failure, so no $Q_s$ satisfies
\eqref{eq:good-distribution}.
In each round, let $\mathcal D_s$ be the low-scoring set and choose
a maximizing report:
\begin{equation}\label{eq:deficient-set}
 \mathcal D_s=\{a\in\mathcal A:V_s^*-V_s(a)>\Delta\},\qquad
 V_s^*=\max_{a\in\mathcal A}V_s(a),\qquad
 a_s^*\in\arg\max_{a\in\mathcal A}V_s(a).
\end{equation}
Then $Q_s(\mathcal D_s)>\beta$.
The sequence $z^*=(a_1^*,\ldots,a_t^*)\in\mathcal A^t$ depends only on
the original profile and is fixed before opponents are sampled.
Throughout each player's deviation experiment, the reference quantities
remain those computed by Algorithm~\ref{alg:recovery} on the original
profile.

\begin{lemma}[A deviation with no weight reduction]\label{lem:survival}
If $Q_s(\mathcal D_s)>\beta$ in every round, any player deviating to
$z^*$ against the original opponent strategies receives payoff one with
probability at least $1-F/\sqrt C$,
where $F=F(\beta,\Delta,t)$ is independent of $C$, the report count,
the mixed profile, and the deviating player.
An explicit choice is given in \eqref{eq:failure-constant}.
\end{lemma}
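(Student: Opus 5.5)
Fix the deviating player $i$ and let $Y$ be the realized profile: $Y_i=z^*$, and $Y_j\sim\sigma_j$ independently for $j\ne i$. By Lemma~\ref{lem:unchanged-weight}, it suffices to show that with probability at least $1-F/\sqrt C$, every round $s$ satisfies
\[
 \sum_{j:\,\widehat V_s(Y_{j,s})<\widehat V_s(a_s^*)-h}\widehat W_{j,s-1}>(1-q)Cq^{s-1}.
\]
Then player $i$'s multiplier is one in every round, so $\widehat W_{i,t}=1$. The plan is to compare the empirical filter with the reference filter round by round, by induction on $s$. We track two error quantities. The first is the weight error $E_s=\frac1C\sum_j|\widehat W_{j,s}-W_s(Y_j)|$. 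The second is the score error, at the fixed report $a_s^*$ and on average over sampled reports, between $\widehat V_s$ and $V_s$. We aim to bound their expectations by $O(C^{-1/2})$ with constants depending only on $\beta,\Delta,t$.

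\emph{Sampling step.} The functions $z\mapsto W_{s-1}(z)$, $z\mapsto W_{s-1}(z)k(a,z_s)$ for fixed $a$, and $z\mapsto W_{s-1}(z)\ind_{\{z_s\in\mathcal D_s\}}$ are fixed functions of one sequence with values in $[0,1]$. By the averaging identity for $P$, the empirical average over $j$ of each function at $Y_j$ has mean within $1/C$ of its $P$-expectation; the correction accounts for the deviator's single replaced term. Since the $Y_j$ are independent, the variance is at most $1/C$. Hence the expected deviation is $O(C^{-1/2})$. This controls the reference mass $\frac1C\sum_jW_{s-1}(Y_j)\approx q^{s-1}$ and the reference low-scoring mass $\frac1C\sum_jW_{s-1}(Y_j)\ind_{\{Y_{j,s}\in\mathcal D_s\}}\approx q^{s-1}Q_s(\mathcal D_s)>q^{s-1}\beta$.

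For scores, $\widehat V_s(a)$ differs from the reference-weight empirical score by at most $E_{s-1}/q^{s-1}$, using $|k|\le1$. That empirical score concentrates around $V_s(a)$ for each fixed $a$. We cannot union-bound over $\mathcal A$, so we treat the two kinds of score separately. At $a=a_s^*$ we use the fixed-$a$ estimate. For the sampled reports we bound the average error $\frac1C\sum_j|\widehat V_s(Y_{j,s})-V_s(Y_{j,s})|$. For this we condition on $Y_j$ and apply the leave-one-out estimate to the remaining players, whose draws are independent of $Y_j$; the self-term costs $O(1/C)$. Markov's inequality then bounds the number of players with score error above $h$ by $O(C^{1/2})$ times the expected errors, and since each weight is at most one, it bounds their weight as well. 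Every remaining player $j$ with $Y_{j,s}\in\mathcal D_s$ satisfies $\widehat V_s(Y_{j,s})<V_s^*-\Delta+h<\widehat V_s(a_s^*)-h$, using $h=\Delta/4$. The low-scoring empirical weight is therefore at least $Cq^{s-1}\beta$ minus $O(C^{1/2})$ terms. Since $1-q=\beta/2$, this leaves the fixed margin $Cq^{s-1}\beta/2$.

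\emph{Propagation step (main obstacle).} We need $E_s$ in terms of $E_{s-1}$ and the score errors, even though the thresholds $\widehat\theta_s$ and $\theta_s$ may differ substantially. We first bound the weight error against a fixed threshold. Since $\phi$ is $1/h$-Lipschitz in $v$, the weight error under a common threshold $\theta_s$ is bounded by $E_{s-1}$ plus $h^{-1}$ times the average score error. Next we use the fact that both weight vectors are calibrated to total $q^s$, up to sampling error. Because $\phi$ is monotone in $\theta$, moving from $\theta_s$ to $\widehat\theta_s$ changes every weight in the same direction. The $\ell_1$ change therefore equals the change in total mass, which calibration pins down to within the fixed-threshold error plus sampling error. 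This gives $E_s\le c(h)(E_{s-1}+\text{score and sampling errors})$, so $\E E_s\le F_s/\sqrt C$ with $F_s$ depending on $\beta,\Delta,s$. Applying Markov's inequality to each round's error event and taking a union bound over the $t$ rounds yields the failure probability $F/\sqrt C$, with an explicit constant of the form $F=(c/h)^{O(t)}/(\beta q^{t})$.
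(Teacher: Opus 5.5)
Your proposal is correct and follows essentially the paper's own argument: reference weights replace empirical ones in the sampling terms, sampled-report score errors are handled by conditioning on one opponent's sequence, inaccurately scored players are discarded by a counting bound that uses weights at most one, the update is stabilized by the common-threshold monotonicity argument of Lemma~\ref{lem:calibration}, and Markov's inequality with a union bound over rounds finishes the proof. One caveat: your closing form $F=(c/h)^{O(t)}/(\beta q^t)$ does not follow from your own recursion, because each round's score error carries a factor $q^{-(s-1)}$ from normalizing by $Cq^{s-1}$, and these factors compound to $q^{-t(t-1)/2}$ (as in the proof of Proposition~\ref{prop:population}); this is harmless for the lemma, which only needs some $F(\beta,\Delta,t)$, but your stated form would give a player count polynomial in $1/((1-\eta)(1-\eps))$, which the paper leaves open.
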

The proof of Lemma~\ref{lem:survival} is in Section~\ref{subsec:survival}
and uses the estimates in Section~\ref{subsec:estimates}.
We first show that its conclusion implies the recovery guarantee in
Theorem~\ref{thm:filter}.
\begin{proof}[Recovery from Lemma~\ref{lem:survival}]
The number of rounds $t$ is already fixed, and
$F=F(\beta,\Delta,t)$ is independent of $C$.
We may therefore choose an integer $C$ so that
\begin{equation}\label{eq:population-choice}
 \sqrt C>\frac{2F}{1-\eps}.
\end{equation}
This choice gives $F/\sqrt C<(1-\eps)/2$.
The choice of $t$ in \eqref{eq:parameters} gives
$q^t\le(1-\eta)(1-\eps)/2$. Combining the two bounds yields
\begin{equation}\label{eq:payoff-contradiction}
 (1-\eta)(1-\eps-F/\sqrt C)
 >\frac{(1-\eta)(1-\eps)}2
 \ge q^t.
\end{equation}
Let $\sigma$ be an $(\eps,\eta)$-weak ANE.
If Algorithm~\ref{alg:recovery} returned failure on $\sigma$, then
every $Q_s$ would fail \eqref{eq:good-distribution}.
For any player $i$, sample $Y_j\sim\sigma_j$ independently for $j\ne i$.
Because payoffs are binary, Lemma~\ref{lem:survival} gives
\[
 U_i(z^*,\sigma_{-i})
 =\Pr\{U_i(z^*,Y_{-i})=1\}
 \ge1-F/\sqrt C.
\]
For every player satisfying its $\eps$-approximate Nash inequalities,
the expected payoff at $\sigma$ would therefore satisfy
\[
 U_i(\sigma)\ge U_i(z^*,\sigma_{-i})-\eps
             \ge1-F/\sqrt C-\eps>0.
\]
At least $C-\lfloor\eta C\rfloor\ge(1-\eta)C$ players satisfy this
inequality. Its common lower bound is positive, and the remaining
payoffs are nonnegative. Hence
\[
 \frac1C\sum_i U_i(\sigma)
 \ge(1-\eta)(1-\eps-F/\sqrt C)>q^t,
\]
where the strict inequality is \eqref{eq:payoff-contradiction}.
This contradicts the bound on average payoff in Lemma~\ref{lem:budget}.
Hence Algorithm~\ref{alg:recovery} returns a distribution satisfying
\eqref{eq:good-distribution}.
\end{proof}

\subsection{Comparing the two filtering processes}\label{subsec:estimates}

To force the deviator's multiplier to equal one, we must keep its
empirical score high and show that enough prior player weight is on
low-scoring reports with small score errors.
These tasks require a score estimate at the fixed maximizing report
and an average estimate at sampled reports.

Empirical weights depend on all previously evaluated reports.
We first replace them in the analysis by reference weights, reducing
the sampling terms to fixed functions of independent opponent sequences.
The errors propagate in two steps: previous weight error contributes
to current score error, and these errors together control the next
weight update. Neither step requires a union bound over the report set,
so the constants do not depend on its size.

We control the weight update with the following stability lemma.
A calibration equation can have flat intervals, so small perturbations
can change the selected threshold substantially. We instead compare
updated weights directly:
for fixed prior weights and scores, moving a common threshold moves
all updated weights in the same direction.
For a vector $x\in\mathbb R^C$, write
$\langle x\rangle=C^{-1}\sum_jx_j$.

\begin{lemma}[Stability of the weight update]\label{lem:calibration}
Let $w,\widetilde w\in[0,1]^C$ be weight vectors,
$v,\widetilde v\in[0,1]^C$ be score vectors, and let
$\theta,\widetilde\theta\in\mathbb R$ be thresholds.
Define
$p_j=w_j\phi(v_j,\theta)$ and
$\widetilde p_j=\widetilde w_j\phi(\widetilde v_j,\widetilde\theta)$.
Then
\begin{equation}\label{eq:calibration-stability}
 \langle|\widetilde p-p|\rangle
 \le2\langle|\widetilde w-w|\rangle
    +\frac2h\langle|\widetilde v-v|\rangle
    +|\langle\widetilde p\rangle-\langle p\rangle|.
\end{equation}
\end{lemma}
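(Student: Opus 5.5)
We introduce an intermediate vector that uses the perturbed weights and scores but the original threshold, and split the difference through it. Put
\[
 r_j=\widetilde w_j\phi(\widetilde v_j,\theta),
\]
so that $\widetilde p-p=(\widetilde p-r)+(r-p)$. Then
\[
 \langle|\widetilde p-p|\rangle\le\langle|\widetilde p-r|\rangle+\langle|r-p|\rangle .
\]
We bound each piece separately.

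For the second piece $r-p$ the threshold is common, so we only use the elementary properties of $\phi$. We have
\[
 |r_j-p_j|\le|\widetilde w_j-w_j|\phi(\widetilde v_j,\theta)+w_j|\phi(\widetilde v_j,\theta)-\phi(v_j,\theta)| .
\]
Since $\phi\in[0,1]$, $w_j\le1$, and $\phi$ is $1/h$-Lipschitz in $v$, this gives
\[
 |r_j-p_j|\le|\widetilde w_j-w_j|+|\widetilde v_j-v_j|/h .
\]
Averaging yields
\[
 \langle|r-p|\rangle\le\langle|\widetilde w-w|\rangle+\tfrac1h\langle|\widetilde v-v|\rangle .
\]

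For the first piece $\widetilde p-r$, only the threshold differs. Because $\phi$ is nonincreasing in $\theta$ and $\widetilde w_j\ge0$, every coordinate $\widetilde p_j-r_j$ has the same sign: nonpositive if $\widetilde\theta\ge\theta$, nonnegative otherwise. This monotonicity is the key point, and it is why we never compare the thresholds themselves. With a common sign, the average of absolute values equals the absolute value of the average:
\[
 \langle|\widetilde p-r|\rangle=|\langle\widetilde p\rangle-\langle r\rangle| .
\]
Then
\[
 |\langle\widetilde p\rangle-\langle r\rangle|\le|\langle\widetilde p\rangle-\langle p\rangle|+|\langle p\rangle-\langle r\rangle|\le|\langle\widetilde p\rangle-\langle p\rangle|+\langle|r-p|\rangle ,
\]
and we apply the bound from the previous paragraph to the last term.

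Adding the two pieces, the term $\langle|r-p|\rangle$ appears twice. This gives exactly the factors $2$ and $2/h$ in \eqref{eq:calibration-stability}, together with the single term $|\langle\widetilde p\rangle-\langle p\rangle|$. The only step needing care is the same-sign observation, which requires the nonnegativity of the weights and the monotonicity of $\phi$ in $\theta$; both are immediate from \eqref{eq:filter-factor}.
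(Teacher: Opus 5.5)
Your proposal is correct and follows essentially the same route as the paper. The paper also uses the intermediate vector $T_j=\widetilde w_j\phi(\widetilde v_j,\theta)$, bounds $\langle|T-p|\rangle$ by Lipschitz continuity, and applies the same-sign monotonicity argument to the threshold change before the same triangle-inequality bookkeeping.
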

\begin{proof}
First change the prior weights and scores while keeping the threshold
$\theta$ fixed. Put $T_j=\widetilde w_j\phi(\widetilde v_j,\theta)$.
Since $0\le\phi\le1$ and $\phi$ is $1/h$-Lipschitz in its score argument,
\[
 \langle|T-p|\rangle
 \le\langle|\widetilde w-w|\rangle
       +h^{-1}\langle|\widetilde v-v|\rangle.
\]
Next change only the common threshold.
If $\widetilde\theta\ge\theta$, then $\widetilde p_j\le T_j$ for every
$j$ by monotonicity of $\phi$ in the threshold. If
$\widetilde\theta\le\theta$, all these inequalities reverse.
Thus all coordinate differences have the same sign, and
\[
 \langle|\widetilde p-T|\rangle
 =|\langle\widetilde p\rangle-\langle T\rangle|
 \le|\langle\widetilde p\rangle-\langle p\rangle|
       +\langle|T-p|\rangle.
\]
Finally, the triangle inequality gives
\[
 \langle|\widetilde p-p|\rangle
 \le\langle|\widetilde p-T|\rangle+\langle|T-p|\rangle
 \le|\langle\widetilde p\rangle-\langle p\rangle|
       +2\langle|T-p|\rangle.
\]
Substituting the first bound proves \eqref{eq:calibration-stability}.
\end{proof}

We use the next estimate with the deviator's sequence fixed, and again
after conditioning on one opponent's entire sequence. The second use
lets us estimate the score at a sampled report without controlling
all reports simultaneously.

\begin{lemma}[Sampling with fixed sequences]\label{lem:fixed-average}
Let $f:\mathcal A^t\to[0,1]$ be deterministic and let
$\ell\in\{0,\ldots,C\}$. Fix a set of at most $\ell$ players and their
sequences in advance. For all other players, draw $Y_j\sim\sigma_j$
independently. With $P=C^{-1}\sum_j\sigma_j$, we have
\begin{equation}\label{eq:fixed-average}
 \E\left|\frac1C\sum_j f(Y_j)-\E_Pf\right|
 \le\frac\ell C+\frac1{2\sqrt C}.
\end{equation}
\end{lemma}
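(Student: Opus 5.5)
Let $L$ be the fixed set of at most $\ell$ players, with fixed sequences $y_j$ for $j\in L$, and let $R$ be the complement. I would write $\E_Pf=C^{-1}\sum_j\E_{\sigma_j}f$ and split the deviation into a fixed part and a random part:
\[
 \frac1C\sum_j f(Y_j)-\E_Pf
 =\frac1C\sum_{j\in L}\bigl(f(y_j)-\E_{\sigma_j}f\bigr)
 +\frac1C\sum_{j\in R}\bigl(f(Y_j)-\E_{\sigma_j}f\bigr).
\]
Since $f$ takes values in $[0,1]$, each term of the first sum has absolute value at most one. The first part therefore contributes at most $|L|/C\le\ell/C$ deterministically, which is the first term of \eqref{eq:fixed-average}.

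For the random part, I would use Cauchy--Schwarz (Jensen), $\E|X|\le(\E X^2)^{1/2}$. The summands $f(Y_j)-\E_{\sigma_j}f$ for $j\in R$ are independent and have mean zero, so the cross terms vanish and
\[
 \E X^2=\frac1{C^2}\sum_{j\in R}\operatorname{Var}_{\sigma_j}f.
\]
A $[0,1]$-valued random variable has variance at most $1/4$, by Popoviciu's inequality or directly from $\E f^2-(\E f)^2\le\E f-(\E f)^2\le1/4$. Hence $\E X^2\le |R|/(4C^2)\le1/(4C)$, and so $\E|X|\le 1/(2\sqrt C)$. The triangle inequality, applied after taking expectations, combines the two parts and gives \eqref{eq:fixed-average}.

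No step is really an obstacle. The only points needing care are these:
\begin{itemize}
\item The averaging identity for $P$ must be used with all $C$ players. This includes the fixed players $L$, whose contributions are absorbed into the $\ell/C$ term.
\item The variance bound must use independence only across $R$.
\item The case $\ell=C$, where $R$ is empty, is trivial.
\end{itemize}
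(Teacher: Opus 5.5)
Your proof is correct and follows essentially the same route as the paper, which bounds the bias $|\E X-\E_Pf|$ by $\ell/C$ using the fixed players' terms. It bounds the fluctuation $\E|X-\E X|$ by $1/(2\sqrt C)$ using the variance bound $1/(4C)$ over the independent unfixed players and Cauchy--Schwarz, then combines the two with the triangle inequality, and your split into a deterministic fixed part and a mean-zero random part is exactly that bias--fluctuation decomposition.
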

\begin{proof}
Let $I$ be the set of fixed players, with fixed sequences $y_j$ for
$j\in I$, and put $X=C^{-1}\sum_j f(Y_j)$.
The mean of $X$ differs from the reference expectation only in the
terms indexed by $I$:
\[
 \E X-\E_Pf
 =\frac1C\sum_{j\in I}
       \bigl(f(y_j)-\E_{z\sim\sigma_j}f(z)\bigr).
\]
Each term in parentheses has absolute value at most one, so
$|\E X-\E_Pf|\le |I|/C\le\ell/C$.
The unfixed sequences are independent, and a random variable in
$[0,1]$ has variance at most $1/4$. Therefore
\[
 \operatorname{Var}(X)
 =\frac1{C^2}\sum_{j\notin I}\operatorname{Var}(f(Y_j))
 \le\frac1{4C}.
\]
The triangle inequality and Cauchy--Schwarz now give
\[
 \E|X-\E_Pf|
 \le\E|X-\E X|+|\E X-\E_Pf|
 \le\frac1{2\sqrt C}+\frac\ell C,
\]
which is \eqref{eq:fixed-average}.
\end{proof}

\begin{lemma}[Sampling estimate]\label{lem:population}
Fix a player $i\in\{1,\ldots,C\}$ and a sequence
$z^*\in\mathcal A^t$.
To evaluate player $i$'s deviation to $z^*$, set $Y_i=z^*$ and draw
$Y_j\sim\sigma_j$ independently for $j\ne i$.
Compute the hatted weights and scores from this pure profile, keeping
the reference weights $W_s$ and scores $V_s$ from the original $P$.
All expectations below are over the sampled opponent sequences.
Let $r_s$ measure average weight error after round $s$ and $e_s$ average
score error at the chosen round-$s$ reports:
\begin{equation}\label{eq:errors}
 r_s=\frac1C\sum_j|\widehat W_{j,s}-W_s(Y_j)|,\qquad
 e_s=\frac1C\sum_j|\widehat V_s(Y_{j,s})-V_s(Y_{j,s})|.
\end{equation}
There are constants $R_s,E_s$, depending only on $q,h,s$ and independent
of $C$, the profile, and the report count, such that the following bounds
hold for $1\le s\le t$:
\begin{equation}\label{eq:population-estimates}
 \E r_s\le\frac{R_s}{\sqrt C},\qquad
 \E e_s\le\frac{E_s}{\sqrt C},\qquad
 \E|\widehat V_s(a)-V_s(a)|\le\frac{E_s}{\sqrt C}
 \quad(a\in \mathcal A\text{ fixed}).
\end{equation}
In the last estimate, $a$ may be any report determined by the original
profile before the opponents are sampled.
\end{lemma}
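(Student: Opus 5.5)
We argue by induction on $s$, starting from $r_0=0$ (all weights equal one in both filters). Each round has two steps: first bound the score errors using $r_{s-1}$, then bound $r_s$ using Lemma~\ref{lem:calibration}.

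\emph{Score errors.} For a fixed report $a$, insert the reference weights. Write
\[
 \widehat V_s(a)-V_s(a)
 =\frac1{Cq^{s-1}}\sum_j\bigl(\widehat W_{j,s-1}-W_{s-1}(Y_j)\bigr)k(a,Y_{j,s})
 +\Bigl(\frac1{Cq^{s-1}}\sum_j W_{s-1}(Y_j)k(a,Y_{j,s})-V_s(a)\Bigr).
\]
The first term is at most $r_{s-1}/q^{s-1}$ in absolute value. For the second, put $f_a(z)=W_{s-1}(z)k(a,z_s)\in[0,1]$. By \eqref{eq:deterministic-distribution} we have $\E_P f_a=q^{s-1}V_s(a)$. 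Lemma~\ref{lem:fixed-average} with $\ell=1$ (the deviator fixed) bounds the second term by $q^{-(s-1)}(1/C+1/(2\sqrt C))$. This gives the third estimate in \eqref{eq:population-estimates}, with $E_s$ of the form $q^{-(s-1)}(R_{s-1}+3/2)$. It uses $1/C\le1/\sqrt C$ and requires $a$ to be fixed in advance.

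For $e_s$ the report $Y_{j,s}$ is random, so a single fixed-$a$ bound does not apply directly. Write $e_s\le C^{-1}\sum_j|\widehat V_s(Y_{j,s})-V_s(Y_{j,s})|$ and treat each summand separately. For $j=i$ the report $a_s^*$ is fixed, so the bound above applies. For $j\ne i$, condition on $Y_j$ and apply Lemma~\ref{lem:fixed-average} with $\ell=2$ (players $i$ and $j$ fixed) to $f_{Y_{j,s}}$. This is legitimate because the other opponents stay independent of $Y_j$. The weight-error term is again bounded by $r_{s-1}/q^{s-1}$, since it does not depend on $a$. Averaging over $j$ gives $\E e_s\le q^{-(s-1)}(\E r_{s-1}+2/C+1/(2\sqrt C))$, so no union bound over $\mathcal A$ is needed.

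\emph{Weight errors.} Apply Lemma~\ref{lem:calibration} with
\[
 w_j=W_{s-1}(Y_j),\quad v_j=V_s(Y_{j,s}),\quad \theta=\theta_s,\qquad
 \widetilde w_j=\widehat W_{j,s-1},\quad \widetilde v_j=\widehat V_s(Y_{j,s}),\quad \widetilde\theta=\widehat\theta_s.
\]
Then $p_j=W_s(Y_j)$ and $\widetilde p_j=\widehat W_{j,s}$, so
\[
 r_s\le 2r_{s-1}+\frac2h e_s+\bigl|q^s-\langle p\rangle\bigr|,
\]
because $\langle\widetilde p\rangle=q^s$ by Lemma~\ref{lem:budget}. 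The function $W_s$ is deterministic with values in $[0,1]$, and $\E_PW_s=q^s$ by \eqref{eq:deterministic-mass}. Lemma~\ref{lem:fixed-average} with $\ell=1$ therefore bounds $\E|\langle p\rangle-q^s|$ by $1/C+1/(2\sqrt C)$. Taking expectations yields a linear recursion
\[
 R_s=2R_{s-1}+\frac2h E_s+\frac32,
\]
whose solution depends only on $q,h,s$.

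The main obstacle is the $e_s$ estimate. Its difficulty is that the score is evaluated at sampled reports while the scores themselves depend on the sample. The conditioning trick with $\ell=2$ resolves this, and it is the only place where dependence on $|\mathcal A|$ could enter. The threshold mismatch, which is a second potential trouble spot, is absorbed by Lemma~\ref{lem:calibration} without ever comparing $\widehat\theta_s$ with $\theta_s$.
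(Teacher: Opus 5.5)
Your proposal is correct and follows the paper's proof essentially step for step. It uses the same splitting of the score error through reference weights, the same bound via Lemma~\ref{lem:fixed-average} with the deviator fixed (and with opponent $j$ also fixed after conditioning on $Y_j$ for sampled reports), and the same use of Lemma~\ref{lem:calibration} for the weight recursion. The one bookkeeping point is that a single $E_s$ must cover both $\E e_s$ and the fixed-report error. Your own computation then forces $E_s=q^{-(s-1)}(R_{s-1}+5/2)$, not the $+3/2$ you quote; the paper takes $+3$.
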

\begin{proof}
\emph{Score errors.}
We separate score error into the effect of replacing weights and the
remaining sampling error. For the first step, define the intermediate sum
\[
 \widetilde V_s(a)=\frac1{Cq^{s-1}}\sum_jW_{s-1}(Y_j)k(a,Y_{j,s}).
\]
The sampled weights $W_{s-1}(Y_j)$ need not sum to $Cq^{s-1}$.
Since $0\le k\le1$, for every realized profile and every report $a$,
\begin{equation}\label{eq:score-replacement}
 |\widehat V_s(a)-\widetilde V_s(a)|\le r_{s-1}/q^{s-1}.
\end{equation}
For fixed $a$, the unnormalized summands evaluate the fixed function
$f_a(z)=W_{s-1}(z)k(a,z_s)\in[0,1]$.
Grouping sequences by their round-$s$ report in
\eqref{eq:deterministic-distribution} gives
\[
 \E_P f_a
 =q^{s-1}\sum_{b\in\mathcal A}Q_s(b)k(a,b)
 =q^{s-1}V_s(a).
\]
Only the deviator's sequence is fixed, so
Lemma~\ref{lem:fixed-average}, followed by division by $q^{s-1}$, gives
\[
 \E|\widetilde V_s(a)-V_s(a)|
 \le\frac{1/C+1/(2\sqrt C)}{q^{s-1}}
 \le\frac3{2q^{s-1}\sqrt C}.
\]

To estimate the error at a sampled report $a=Y_{j,s}$, condition on the
entire sequence $Y_j$. This makes $a$, and hence $f_a$, fixed in the
conditional experiment. The reference quantities $P,W_{s-1},V_s$
remain those of the original profile.
The sequences of $i$ and $j$ are fixed (only one sequence if $i=j$),
and all remaining sequences retain their original independent laws.
Lemma~\ref{lem:fixed-average} therefore gives
\[
 \E\!\left[
   |\widetilde V_s(Y_{j,s})-V_s(Y_{j,s})|\mid Y_j
 \right]
 \le\frac{2/C+1/(2\sqrt C)}{q^{s-1}}
 \le\frac5{2q^{s-1}\sqrt C}.
\]
Taking expectations over $Y_j$ removes the conditioning.
We have used independence only for the sum with reference weights.
The empirical weights may depend on the entire sample, but
\eqref{eq:score-replacement} holds for each realization.
Combining these bounds, using $5/2\le3$, and averaging over $j$ gives
\begin{equation}\label{eq:score-recursion}
 \E e_s\le\frac{\E r_{s-1}+3/\sqrt C}{q^{s-1}}.
\end{equation}
The same bound applies to the expected absolute score error at each
fixed report.

\emph{Weight errors.}
Apply Lemma~\ref{lem:calibration} with
\[
 \begin{aligned}
 (w_j,v_j,\theta)
   &=\bigl(W_{s-1}(Y_j),V_s(Y_{j,s}),\theta_s\bigr),\\
 (\widetilde w_j,\widetilde v_j,\widetilde\theta)
   &=\bigl(\widehat W_{j,s-1},\widehat V_s(Y_{j,s}),
           \widehat\theta_s\bigr).
 \end{aligned}
\]
The updated weights are $p_j=W_s(Y_j)$ and
$\widetilde p_j=\widehat W_{j,s}$, and their prior weights and scores
differ by average errors $r_{s-1}$ and $e_s$.
Empirical calibration gives $\langle\widetilde p\rangle=q^s$ on
every sample. Reference calibration gives $\E_PW_s=q^s$, but the
sample average $C^{-1}\sum_jW_s(Y_j)$ need not equal $q^s$.
The lemma therefore includes this sampling error:
\begin{equation}\label{eq:weight-recursion}
 r_s\le2r_{s-1}+\frac{2e_s}{h}
       +\left|\frac1C\sum_jW_s(Y_j)-q^s\right|.
\end{equation}
The reference function $W_s$ is fixed and satisfies $\E_PW_s=q^s$.
Since only the deviator's sequence is fixed,
Lemma~\ref{lem:fixed-average} bounds the expected last term by
$1/C+1/(2\sqrt C)\le2/\sqrt C$.

\emph{Constants.}
The score and weight recursions give explicit choices for the constants:
\begin{equation}\label{eq:round-constants}
 R_0=0,\qquad E_s=\frac{R_{s-1}+3}{q^{s-1}},\qquad
 R_s=2R_{s-1}+\frac{2E_s}{h}+2\quad(1\le s\le t).
\end{equation}
Starting from $r_0=0$, induction using the score and weight recursions
gives \eqref{eq:population-estimates} with these choices.
\end{proof}

\subsection{Weight on low-scoring reports and the score gap}\label{subsec:survival}

We now apply Lemma~\ref{lem:population} to the sequence
$z^*=(a_1^*,\ldots,a_t^*)$ of maximizing reports chosen at the start
of Section~\ref{sec:proof}. The resulting score and weight estimates
will make Lemma~\ref{lem:unchanged-weight} applicable in every round
with high probability.

\begin{proof}[Proof of Lemma~\ref{lem:survival}]
Fix a player's deviation experiment from Lemma~\ref{lem:population}
and a round $s$.

\emph{A sufficient event in one round.}
The total prior player weight in this round is $Cq^{s-1}$.
Call a report's empirical score \emph{accurate} in this round if
$|\widehat V_s(a)-V_s(a)|\le\Delta/4$.
To force the deviator's multiplier to equal one, it suffices that
players using accurately scored reports in $\mathcal D_s$ account
for more than a $3\beta/4$ fraction of total prior weight, and that
the score error at $a_s^*$ is less than $\Delta/4$.
Indeed, every such low-scoring report has empirical score below
$V_s^*-3\Delta/4$, while the deviator's score is above
$V_s^*-\Delta/4$.
The gap exceeds $\Delta/2>h$, and those lower-scoring players account
for more than the fraction $1-q=\beta/2$ that the round removes.
Lemma~\ref{lem:unchanged-weight} therefore forces the deviator's multiplier to equal one,
regardless of its prior weight.

\emph{Weight on accurately scored reports.}
We now lower-bound the fraction of total prior weight on accurately
scored reports in $\mathcal D_s$.
Using reference weights, define a sample estimate of the probability
$Q_s(\mathcal D_s)$:
\[
 J_s=\frac1{Cq^{s-1}}\sum_jW_{s-1}(Y_j)
                       \ind_{\{Y_{j,s}\in\mathcal D_s\}}.
\]
The summands evaluate the fixed function
$z\mapsto W_{s-1}(z)\ind_{\{z_s\in\mathcal D_s\}}$.
Its mean under $P$ is $q^{s-1}Q_s(\mathcal D_s)$ by
\eqref{eq:deterministic-distribution}.
Lemma~\ref{lem:fixed-average}, with the deviator's sequence fixed, gives
$\E|J_s-Q_s(\mathcal D_s)|\le2/(q^{s-1}\sqrt C)$.
Replacing reference weights by empirical weights can decrease this
sum by at most $r_{s-1}/q^{s-1}$.
After this replacement, it is the fraction of total prior player
weight on reports in $\mathcal D_s$.

For a lower bound, omit the contributions of players whose reports
have score error greater than $\Delta/4$. This omission changes only
the estimate, not the filter or any player's weight.
For every $x\ge0$, we have $\ind_{\{x>\Delta/4\}}\le4x/\Delta$.
Using this inequality and $\widehat W_{j,s-1}\le1$ bounds the excluded
fraction of total prior weight by
\[
 \frac1{Cq^{s-1}}\sum_j\widehat W_{j,s-1}
   \ind_{\{|\widehat V_s(Y_{j,s})-V_s(Y_{j,s})|>\Delta/4\}}
 \le\frac{4e_s}{q^{s-1}\Delta}.
\]
This is a deterministic bound for each realized sample.
Thus players using accurately scored reports in $\mathcal D_s$
account for a fraction of total prior weight at least
$Q_s(\mathcal D_s)-T_s$, where

\begin{equation}\label{eq:lost-mass}
 T_s=|J_s-Q_s(\mathcal D_s)|+\frac{r_{s-1}}{q^{s-1}}
                               +\frac{4e_s}{q^{s-1}\Delta}.
\end{equation}
Lemma~\ref{lem:population} gives
\begin{equation}\label{eq:expected-lost-mass}
 \E T_s\le\frac{R_{s-1}+2+4E_s/\Delta}{q^{s-1}\sqrt C}.
\end{equation}

\emph{Probability of a good round.}
Call the round \emph{good} if both inequalities hold:
\begin{equation}\label{eq:good-round}
 T_s<\beta/4,
 \qquad |\widehat V_s(a_s^*)-V_s^*|<\Delta/4.
\end{equation}
In a good round, accurately scored reports in $\mathcal D_s$ account
for a fraction at least $Q_s(\mathcal D_s)-T_s>3\beta/4$ of total
prior player weight, and the score error at $a_s^*$ is less than
$\Delta/4$. These are the two sufficient conditions established above.

A round is not good only if $T_s\ge\beta/4$ or the score error at
$a_s^*$ is at least $\Delta/4$.
Applying Markov's inequality to each event and adding the bounds gives
\begin{equation}\label{eq:round-failure}
 \begin{aligned}
 \Pr\{\text{round }s\text{ is not good}\}
 &\le\frac4\beta\,\E T_s
       +\frac4\Delta\,\E|\widehat V_s(a_s^*)-V_s^*|\\
 &\le\frac1{\sqrt C}
 \left[\frac{4(R_{s-1}+2+4E_s/\Delta)}{q^{s-1}\beta}
                    +\frac{4E_s}{\Delta}\right].
 \end{aligned}
\end{equation}

These are unconditional bounds in the same deviation experiment.
The rounds depend on the same sampled sequences, but the union bound
still bounds the probability that at least one round is not good by
$F/\sqrt C$, where
\begin{equation}\label{eq:failure-constant}
 F=\sum_{s=1}^t
 \left[\frac{4(R_{s-1}+2+4E_s/\Delta)}{q^{s-1}\beta}
                    +\frac{4E_s}{\Delta}\right].
\end{equation}
Thus all rounds are good with probability at least $1-F/\sqrt C$.
On this event the deviator's weight multiplier equals one in every round.
Since it starts with weight one, its final weight and binary payoff
both equal one.
\end{proof}

\subsection{Construction and recovery in the bit model}\label{subsec:exact}

Construction and recovery use exact rational arithmetic. We verify
polynomial bit cost, including the exact test for a payoff of one.

\paragraph{Exact calibration.}
For rational prior weights $w_j$ and scores $v_j$, the calibration
function $C^{-1}\sum_jw_j\phi(v_j,\theta)$ is piecewise affine with
rational breakpoints $v_j-h,v_j$.
Sort these breakpoints with the endpoints $-h,1$ and solve on each
affine piece, choosing the left endpoint of a flat solution interval.
This finds the smallest root exactly, including ties and zero weights.
An isolated root is rational because it solves an affine equation with
rational coefficients, and the endpoints of a flat solution interval
are rational breakpoints.
In particular, the payoff test $\widehat W_{i,t}=1$ is exact.

\paragraph{Constructing the payoff table.}
At a pure profile, only the $C$ empirical scores of the chosen reports
are needed per round, requiring at most $C^2$ evaluations of $k$.
Since $C,t$ are fixed, each payoff uses a fixed number of arithmetic
operations on rational numbers of polynomial bit length.
Enumerating all profiles produces exactly $C|\mathcal A|^{tC}$ binary
entries and hence a table of polynomial size.

\paragraph{Implementing Algorithm~\ref{alg:recovery}.}
Given a rational profile, Algorithm~\ref{alg:recovery} forms $P$ and
computes the reference weights and scores.
In round $s$, solve $B_s(\theta)=q^s$, where
\[
 B_s(\theta)=
 \sum_{z\in\mathcal A^t}P(z)W_{s-1}(z)\phi(V_s(z_s),\theta).
\]
The breakpoint procedure applies with rational sequence masses
$P(z)W_{s-1}(z)$ and breakpoints $V_s(z_s)-h,V_s(z_s)$.
Since $t$ is fixed, $|\mathcal A|^t$ is polynomial in the encoding length
of the full scoring-function table.
Each round therefore uses rational sums and calibration over
polynomially many sequences.
The bit length of a rational sum is bounded by a constant
multiple of the total bit length of its terms. The scores, affine
calibration coefficients, and updated weights require a fixed number
of sum, product, and ratio stages per round. Thus each round increases
the maximum bit length by at most a polynomial factor. Since $t$ is
fixed, all intermediate bit lengths are polynomial in the scoring-function and
profile encoding lengths.
The final search computes each $V_s^*$ and tests
\eqref{eq:good-distribution} exactly.
Thus Algorithm~\ref{alg:recovery} has polynomial bit cost and returns
a rational distribution whenever it succeeds.

\begin{proof}[Proof of Theorem~\ref{thm:filter}]
Choose $q,h,t$ as in \eqref{eq:parameters}, and then choose $C$ to
satisfy \eqref{eq:population-choice}, with $F$ given by
\eqref{eq:failure-constant}.
The recurrences \eqref{eq:round-constants} show that these choices
depend only on $\beta,\Delta,\eps,\eta$.

Lemma~\ref{lem:budget} ensures that the payoff rule
\eqref{eq:binary-payoff} is well defined, and the permutation argument
in Section~\ref{subsec:game} establishes symmetry.
The common action set is $\mathcal A^t$, the payoffs are binary,
and the table has $C|\mathcal A|^{tC}$ entries.
The exact construction analysis above gives the polynomial bit cost
required by the construction claim.

Lemma~\ref{lem:survival}, proved in Section~\ref{subsec:survival},
and the recovery argument following its statement show that
Algorithm~\ref{alg:recovery} succeeds at every $(\eps,\eta)$-weak ANE.
For a rational profile, the implementation above returns a rational
distribution satisfying \eqref{eq:good-distribution} with the stated
polynomial bit cost. This proves the recovery claim.
\end{proof}

\begin{proof}[Proof of Theorem~\ref{thm:amplifier}]
Fix rational $\beta,\Delta\in(0,1)$ with $\beta+2\Delta<\delta$.
Given the bimatrix game $G$, form the rational scoring function
\eqref{eq:pair-score} on $\mathcal A=A_1\times A_2$.
The construction claim of Theorem~\ref{thm:filter} produces the
symmetric binary-payoff game $H$.
The recovery claim takes any rational $(\eps,\eta)$-weak ANE of $H$
to a rational distribution $Q$ satisfying \eqref{eq:good-distribution}.
Lemma~\ref{lem:decode} then shows that
$Q^{(1)}\otimes Q^{(2)}$ is a $(\beta+2\Delta)$-ANE and hence a
$\delta$-ANE of $G$.

The scoring-function table has $|\mathcal A|^2$ rational entries and the
constructed game has $C|\mathcal A|^{tC}$ binary entries.
Both encoding lengths are bounded by fixed polynomials in the encoding
length of $G$.
The bit-cost guarantees in Theorem~\ref{thm:filter}, followed by
rational marginalization, give the
claimed construction and recovery times.
For fixed real tolerances, choose rational $\beta,\Delta$ and integers
$t,C$ satisfying the strict bounds once, as part of the algorithm.
These choices determine the polynomial exponents.
\end{proof}

\section{Complexity and consequences}\label{sec:complexity}

We combine amplification with the bimatrix lower bound, then compare
sufficient and necessary bounds on the number of players.

\subsection{Conditional hardness}\label{subsec:hardness}

We use the deterministic form of the PPAD-ETH hypothesis
in~\cite[Hypothesis~1]{rubinstein}.
Let $n$ denote the full binary description length of an \EOL{}
instance, including its predecessor and successor circuits.

\begin{assumption}[PPAD-ETH]\label{hyp:ppad-eth}
Solving \EOL{} requires deterministic time $2^{\widetilde\Omega(n)}$.
\end{assumption}

The tilde hides polylogarithmic factors. The starting hardness result
is the following theorem.

\begin{theorem}[Rubinstein, {\cite[Theorem~1.2]{rubinstein}}]\label{thm:rubinstein}
Under Assumption~\ref{hyp:ppad-eth}, there is a constant
$\gamma_0\in(0,1)$ such that computing a
$\gamma_0$-approximate Nash equilibrium of a two-player $m\times m$
game with payoffs in $[0,1]$ requires deterministic worst-case time
\[
 m^{(\log m)^{1-o(1)}}.
\]
\end{theorem}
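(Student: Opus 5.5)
This is a cited result, so the plan is to import it from \cite[Theorem~1.2]{rubinstein} rather than reprove it, after checking that its hypotheses and conventions agree with ours. First, I will confirm that Assumption~\ref{hyp:ppad-eth} is exactly the deterministic \cite[Hypothesis~1]{rubinstein}, with $n$ the full description length of the \EOL{} circuits. Second, I will confirm that the equilibrium notion there is the additive $\gamma_0$-ANE of \eqref{eq:regret} for payoffs normalized to $[0,1]$. If Rubinstein states payoffs in another bounded range, an affine rescaling of the payoffs preserves approximate equilibria with the error scaled by the same constant, so $\gamma_0$ merely changes. Third, I will confirm that the lower bound is a worst-case deterministic running-time bound in $m$. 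For square $m\times m$ games the full table length is $\Theta(m^2)$ times the entry bit length, so a bound of the form $m^{(\log m)^{1-o(1)}}$ persists in terms of the table length; this is the form needed when composing with Theorem~\ref{thm:amplifier}.

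For completeness, I would recall the structure of Rubinstein's argument, to make clear where the quasipolynomial exponent comes from. \EOL{} of size $n$ reduces to a constant-accuracy Brouwer fixed-point problem for a Lipschitz map on a hypercube of dimension $d=\widetilde O(n)$. This step uses a local embedding of the line into the cube together with error-correcting codes, so that errors on a constant fraction of coordinates are tolerated. Then comes a \emph{birthday-repetition} imitation game. Each player's action is a block of roughly $\sqrt d$ coordinates together with their values, which gives $m=2^{\widetilde O(\sqrt n)}$ actions. Any $\gamma_0$-ANE of this game encodes an approximate fixed point that is decodable in time $\mathrm{poly}(m)$. An algorithm running in time $m^{(\log m)^{1-\Omega(1)}}$ would then solve \EOL{} in time $2^{\widetilde O(\sqrt n)\cdot(\sqrt n)^{1-\Omega(1)}}=2^{n^{1-\Omega(1)}}$, contradicting $2^{\widetilde\Omega(n)}$. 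Since $\log m=\widetilde\Theta(\sqrt n)$, one has $2^{\widetilde\Omega(n)}=m^{(\log m)^{1-o(1)}}$, with the $o(1)$ absorbing the polylogarithmic factors.

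The main obstacle, if one insisted on a self-contained proof, would be the birthday-repetition soundness. It requires showing that in every $\gamma_0$-ANE both players spread their mass nearly uniformly over coordinate blocks, and that the induced values are consistent and approximately fixed. Rubinstein handles this with a careful analysis. Here I would instead cite it and only verify the bookkeeping above, namely the hypothesis form, the payoff normalization, and the measurement of time against $m$.
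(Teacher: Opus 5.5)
Your proposal matches the paper, which likewise imports this result directly from \cite[Theorem~1.2]{rubinstein} and does not reprove it. The only bookkeeping the paper adds is to round the hard instances' payoffs to a fixed rational grid with error $\tau$, where $\gamma+2\tau<\gamma_0$, so that every entry is rational with constant bit length and the table length is $\Theta(m^2)$; this settles the rationality and entry-bit-length point that your table-length remark leaves implicit.
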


We use this theorem in the rational complete-table model.
Changing every payoff by at most $\tau$ changes the gain from any
unilateral deviation by at most $2\tau$.
Choose fixed rational $\gamma,\tau>0$ with
$\gamma+2\tau<\gamma_0$.
Round the payoffs of the explicit hard instances in the cited reduction
to a fixed rational grid with error at most $\tau$.
A $\gamma$-ANE of the rounded game is a $\gamma_0$-ANE of the original
game. The rounded table has encoding length $M=\Theta(m^2)$, so
computing its $\gamma$-ANE requires time $M^{(\log M)^{1-o(1)}}$.

\begin{proof}[Proof of Theorem~\ref{thm:main}]
Fix $\eps,\eta$ and apply Theorem~\ref{thm:amplifier} with
$\delta=\gamma$.
The reduction maps a bimatrix game of encoding length $M$ to a
symmetric binary-payoff game with a constant number $p_{\eps,\eta}$
of players and encoding length $N\le M^K$, for a constant $K$.
A rational $(\eps,\eta)$-weak ANE of this game yields a $\gamma$-ANE
of the input game with polynomial overhead.
Running time includes writing the rational output profile, so its
encoding length is bounded by that running time. The recovery guarantee
is therefore polynomial in the input length and the solver's running time.
Fixed polynomial changes in input length and running time preserve
the form $N^{(\log N)^{1-o(1)}}$. The bimatrix lower bound therefore
implies \eqref{eq:target-time}.
\end{proof}

The weak-equilibrium theorem also yields hardness when the maximum
payoff gains from unilateral deviations are averaged over the players.

\begin{corollary}[Average maximum payoff improvement]\label{cor:average-regret}
Assume deterministic PPAD-ETH. For every fixed $a\in(0,1)$, there is
a constant player count $p_a$ for which computing a rational product
profile $\sigma$ satisfying
\[
 \frac1{p_a}\sum_{i=1}^{p_a}
 \left[\max_{a_i}u_i(a_i,\sigma_{-i})-u_i(\sigma)\right]\le a
\]
in symmetric binary-payoff games requires deterministic worst-case
time $N^{(\log N)^{1-o(1)}}$, where $N$ is the complete table's binary
encoding length.
\end{corollary}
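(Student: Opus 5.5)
The plan is to reduce Corollary~\ref{cor:average-regret} to Theorem~\ref{thm:main} by a Markov-type counting argument on the players' regrets. Fix $a\in(0,1)$. We choose fixed tolerances $\eps\in(0,1)$ and $\eta\in[0,1)$ so that every profile with average regret at most $a$ is an $(\eps,\eta)$-weak ANE. Then we take $p_a=p_{\eps,\eta}$ from Theorem~\ref{thm:main}.

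Write $g_i=\max_{a_i}u_i(a_i,\sigma_{-i})-u_i(\sigma)\ge0$. Suppose $\frac1{p}\sum_i g_i\le a$. Markov's inequality over a uniformly random player gives
\[
 \#\{i:g_i>\eps\}<\frac{ap}{\eps}.
\]
Set $\eps=\sqrt a$ and $\eta=\sqrt a$. Both lie in $(0,1)$ because $0<a<1$. The number of players gaining more than $\eps$ is then strictly less than $\sqrt a\,p=\eta p$. Since that number is an integer, it is at most $\lfloor\eta p\rfloor$, and $\sigma$ is a $(\sqrt a,\sqrt a)$-weak ANE. Any fixed split with $\eps\eta\ge a$ and both values in $(0,1)$ would serve equally well.

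Now suppose an algorithm finds, in symmetric binary-payoff $p_a$-player games, rational profiles of average regret at most $a$ in time $N^{(\log N)^{1-o(1)}}$ with the inner $o(1)$ failing, that is, asymptotically faster than the bound. Its outputs are rational $(\eps,\eta)$-weak ANE of the same game, with no change in input length or running time. This contradicts Theorem~\ref{thm:main}, applied to the same class of symmetric binary-payoff games with $p_{\eps,\eta}$ players.

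I expect no real obstacle. The only points needing care are the following:
\begin{itemize}
\item the output format must match, and both problems ask for a rational product profile;
\item the floor in the weak-ANE definition, which the strict Markov inequality handles;
\item the existence of the required profile, which holds because exact Nash equilibria have average regret zero, so the search problem is total just as in Theorem~\ref{thm:main}.
\end{itemize}
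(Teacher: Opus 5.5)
Your proposal is correct and follows essentially the same route as the paper: count the players whose regret exceeds $\eps$ and observe that any profile with average regret at most $a$ is an $(\eps,\eta)$-weak ANE of the same game, so Theorem~\ref{thm:main} transfers directly. The only difference is bookkeeping: the paper takes $a<\eps\eta$ strictly with a non-strict count, whereas you take $\eps=\eta=\sqrt a$ and rely on the strict count $b<ap/\eps$, which indeed holds because $a>0$.
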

\begin{proof}
Choose fixed $\eps,\eta\in(0,1)$ with $a<\eps\eta$, and let
$p_a=p_{\eps,\eta}$ be the player count in Theorem~\ref{thm:main}.
Each summand in the displayed average is nonnegative.
If $b$ players can gain more than $\eps$ by deviating, then
$b\eps\le ap_a$, so $b<\eta p_a$.
Thus at most $\lfloor\eta p_a\rfloor$ players violate their
$\eps$-approximate Nash inequalities, and the profile is an
$(\eps,\eta)$-weak ANE of the same game.
Theorem~\ref{thm:main} gives the claimed lower bound.
\end{proof}

\subsection{Bounds on the number of players}\label{subsec:consequences}

The amplification construction uses a number of players that is
independent of the action-set sizes but depends on the approximation
tolerances. We first quantify this dependence for a fixed starting
accuracy $\delta$. We then compare the sufficient bound for this
construction with a restriction imposed by approximation algorithms
on the player counts at which computational hardness can hold.

\begin{proposition}[A sufficient number of players]\label{prop:population}
For fixed $\delta$, the player count in Theorem~\ref{thm:amplifier}
can be chosen to satisfy
\begin{equation}\label{eq:population-bound}
 C\le\exp\!\left(
 O_\delta\!\left(
 \left[1+\log\frac1{(1-\eta)(1-\eps)}\right]^2
 \right)\right).
\end{equation}
\end{proposition}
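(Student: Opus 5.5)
The plan is to unwind the explicit constants in the proof of Theorem~\ref{thm:filter}: the recurrences \eqref{eq:round-constants}, the failure constant \eqref{eq:failure-constant}, and the population choice \eqref{eq:population-choice}. With $\delta$ fixed, we fix rational $\beta,\Delta$ with $\beta+2\Delta<\delta$ once and for all, say $\beta=\Delta=\delta/4$. Then $q=1-\beta/2$ and $h=\Delta/4$ are constants depending only on $\delta$, and only $t$ and $C$ depend on $\eps,\eta$. Write $L=\log\frac1{(1-\eta)(1-\eps)}\ge0$. By \eqref{eq:parameters}, $t$ is the least integer with $q^t\le(1-\eta)(1-\eps)/2$, so
\[
 t\le1+\frac{L+\log2}{\log(1/q)}=O_\delta(1+L).
\]

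The key step is to bound $R_s$ and $E_s$ by a single exponential in $s$. Set $\lambda=1/q>1$ and $c=2/h+2$. From \eqref{eq:round-constants}, $E_s=\lambda^{s-1}(R_{s-1}+3)$ and
\[
 R_s=2R_{s-1}+\tfrac2h\lambda^{s-1}(R_{s-1}+3)+2\le \bigl(2+\tfrac2h\lambda^{s-1}\bigr)(R_{s-1}+3).
\]
Hence $R_s+3\le c\,\lambda^{s-1}(R_{s-1}+3)$, and by induction
\[
 R_s+3\le 3c^s\lambda^{s(s-1)/2}.
\]
This is the main point to watch: the factor $\lambda^{s-1}$ compounds across rounds, giving $\exp(O_\delta(s^2))$ rather than $\exp(O_\delta(s))$. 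That quadratic growth in $s$ is exactly the source of the square in \eqref{eq:population-bound}. Consequently $E_s\le\lambda^{s-1}\cdot3c^{s-1}\lambda^{(s-1)(s-2)/2}$, and both $R_s$ and $E_s$ are at most $\exp(K_\delta(1+s)^2)$ for a constant $K_\delta$.

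Substituting these bounds into \eqref{eq:failure-constant}, each summand is bounded by $O_\delta(1)\cdot\lambda^{s-1}(R_{s-1}+E_s+1)$. Summing over the $t$ rounds therefore gives $F\le t\exp(K'_\delta(1+t)^2)=\exp(O_\delta((1+t)^2))$. The condition \eqref{eq:population-choice} is met by the integer
\[
 C=\bigl\lfloor 4F^2/(1-\eps)^2\bigr\rfloor+1.
\]
Since $(1-\eps)^{-2}\le e^{2L}$, this yields
\[
 C\le\exp\bigl(O_\delta((1+t)^2)+2L+O(1)\bigr).
\]
Finally, $t=O_\delta(1+L)$ gives $C\le\exp(O_\delta((1+L)^2))$, which is \eqref{eq:population-bound}. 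The remaining work is only bookkeeping to confirm that every constant depends on $\delta$ alone.
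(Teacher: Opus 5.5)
Your proposal is correct and follows the paper's proof essentially step for step: bound $t=O_\delta(1+L)$, iterate \eqref{eq:round-constants} so that the factors $q^{-(s-1)}$ compound to $q^{-s(s-1)/2}$ (the paper tracks $R_s+1$ with ratio $(3+6/h)/q^{s-1}$, you track $R_s+3$ with ratio $c/q^{s-1}$), and then read off $F$ and $C$ from \eqref{eq:failure-constant} and \eqref{eq:population-choice}. Two small fixes: derive $R_s+3\le c\,q^{-(s-1)}(R_{s-1}+3)$ directly from the recurrence using $2R_{s-1}+5\le 2(R_{s-1}+3)$, because adding $3$ to your displayed bound on $R_s$ does not give it at $s=1$; and when $\delta$ is irrational, replace $\beta=\Delta=\delta/4$ by rationals satisfying $\beta+2\Delta<\delta$.
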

\begin{proof}
Choose rational $\beta,\Delta$ depending only on $\delta$ with
$\beta+2\Delta<\delta$, so $q,h$ are fixed.
Equation~\eqref{eq:parameters} gives
$t=O_\delta(1+\log(1/((1-\eta)(1-\eps))))$.
To bound $C$ through \eqref{eq:population-choice}, we estimate the
error constants $R_s,E_s$ and $F$.
Since $q^{s-1}\le1$, \eqref{eq:round-constants} implies
\[
 R_s+1
 =\left(2+\frac2{hq^{s-1}}\right)R_{s-1}+\frac6{hq^{s-1}}+3
 \le\frac{3+6/h}{q^{s-1}}(R_{s-1}+1).
\]
Iterating the recurrence gives
\[
 R_s+1
 \le(3+6/h)^s\prod_{j=1}^s q^{-(j-1)}
 =(3+6/h)^s q^{-s(s-1)/2}.
\]
Thus $R_s,E_s$ and the sum $F$ in \eqref{eq:failure-constant} are at
most $\exp(O_\delta(t^2))$.
An integer satisfying \eqref{eq:population-choice} can be chosen with
$\log C=O_\delta(t^2+\log(1/(1-\eps)))$.
Since $(1-\eta)(1-\eps)\le1-\eps$, this proves \eqref{eq:population-bound}.
\end{proof}

Approximation algorithms give a complementary restriction when the
number of players is fixed in advance. The relevant quantity is the
number of players whose approximate Nash inequalities must be
satisfied. An equilibrium computation for those players, with the
others assigned fixed actions, suffices for a weak equilibrium.

\begin{proposition}[A necessary condition for hardness]\label{prop:population-necessary}
Fix an integer $p\ge1$ and tolerances $0<\eps<1$ and $0\le\eta<1$.
If no deterministic polynomial-time algorithm computes a rational
$(\eps,\eta)$-weak ANE in complete-table $p$-player games with payoffs
in $[0,1]$, then
\[
 \left\lceil(1-\eta)p\right\rceil>\frac1{1-\eps},
\]
or equivalently,
\begin{equation}\label{eq:population-necessary}
 p>\frac{\left\lfloor(1-\eps)^{-1}\right\rfloor}{1-\eta}.
\end{equation}
\end{proposition}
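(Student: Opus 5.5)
We prove the contrapositive. Suppose $k:=\lceil(1-\eta)p\rceil\le 1/(1-\eps)$, that is, $1-1/k\le\eps$. We will give a deterministic polynomial-time algorithm for a rational $(\eps,\eta)$-weak ANE in every complete-table $p$-player game with payoffs in $[0,1]$.

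\textbf{Step 1 (the algorithm).} Pick any $k$ players, say $1,\ldots,k$. Fix every remaining player $j>k$ at an arbitrary pure action $b_j$. The result is a $k$-player induced game $G'$ whose complete table is a sub-table of the original, obtained by restricting the profiles and keeping only the first $k$ payoff functions. It is therefore computable in polynomial time. Run the classical polynomial-time $(1-1/k)$-ANE algorithm of~\cite{briest-goldberg-roglin,hemon-rougemont-santha} on $G'$. Output that profile for players $1,\ldots,k$ and the point masses $b_j$ for the other players. For $k=1$, this step just takes a best response for the single player, which has error $0$.

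\textbf{Step 2 (correctness).} For each $i\le k$, both the expected payoff and the best pure-deviation payoff of player $i$ in the full game coincide with those in $G'$. So player $i$ satisfies the $(1-1/k)$-ANE inequality, and hence the $\eps$-ANE inequality, since $1-1/k\le\eps$. The at most $p-k$ other players may violate their inequalities. Because $k\ge(1-\eta)p$, we have $p-k\le\eta p$, and since $p-k$ is an integer, $p-k\le\lfloor\eta p\rfloor$. Thus the output is an $(\eps,\eta)$-weak ANE. Both the output and the cited algorithm's output are rational, and the running time is polynomial.

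\textbf{Step 3 (the equivalent form).} Let $m=\lfloor(1-\eps)^{-1}\rfloor$. Since $k$ is an integer, $k>1/(1-\eps)$ holds iff $k\ge m+1$, that is, iff $\lceil(1-\eta)p\rceil>m$. The ceiling of a real number exceeds the integer $m$ exactly when the number itself exceeds $m$. So this is equivalent to $(1-\eta)p>m$, which gives \eqref{eq:population-necessary}.

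\textbf{Main obstacle.} The only delicate point is ensuring the cited $(1-1/k)$ guarantee applies to complete-table games with rational outputs in deterministic polynomial time for every fixed $k$. The algorithm of~\cite{hemon-rougemont-santha} provides exactly this, recursively using a best response plus a uniform mix. If needed, we would sketch the recursion directly: player $k$ mixes uniformly between an action and a best response, and the argument recurses on the remaining players.
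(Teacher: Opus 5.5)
Your proof is correct and follows the paper's argument essentially step for step. You restrict to $\lceil(1-\eta)p\rceil=p-\lfloor\eta p\rfloor$ players with the rest fixed at pure actions, invoke the cited $(1-1/r)$-ANE algorithm (or a best response when $r=1$), and convert to \eqref{eq:population-necessary} via $\lceil x\rceil>a\iff x>\lfloor a\rfloor$.

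One aside concerns only your fallback sketch. Uniform mixing at every level of the recursion only guarantees error $1-2^{-(k-1)}$, which is worse than $1-1/k$ for $k\ge3$. The mixing must instead be rebalanced: when the $(k-1)$-player subroutine has error $\alpha$, the new player puts weight $1/(2-\alpha)$ on its arbitrary action. Your main argument, like the paper's, rests on the citation, so this does not affect it.
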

\begin{proof}
Put $r=\lceil(1-\eta)p\rceil=p-\lfloor\eta p\rfloor$.
Choose $r$ players and fix the other $p-r=\lfloor\eta p\rfloor$
players to arbitrary pure actions.
The payoff gains available to the selected players are unchanged in
the induced $r$-player game. An $\eps$-ANE of this induced game
therefore extends to an $(\eps,\eta)$-weak ANE of the original game.
If $r=1$, a best response solves the problem.
For each fixed $r\ge2$, the algorithms of Briest, Goldberg, and
R\"oglin~\cite[Theorem~5]{briest-goldberg-roglin} and H\'emon,
de Rougemont, and Santha~\cite[Corollary~3]{hemon-rougemont-santha}
compute a $(1-1/r)$-ANE in deterministic polynomial time, including
the endpoint. Their rational mixtures and best-response computations
have polynomial bit cost for fixed $r$.
Thus the weak-equilibrium problem is polynomial-time solvable whenever
$\eps\ge1-1/r$, including equality. The absence of such an algorithm
requires $\eps<1-1/r$, or equivalently $r>(1-\eps)^{-1}$.
Finally, for real $x,a$, the inequality $\lceil x\rceil>a$ is
equivalent to $x>\lfloor a\rfloor$, which gives
\eqref{eq:population-necessary}.
\end{proof}

The algorithm used in Proposition~\ref{prop:population-necessary}
applies to all complete-table games with payoffs in $[0,1]$, including
the symmetric binary-payoff games in Theorem~\ref{thm:main}.
Under deterministic PPAD-ETH, every player count supporting that
theorem's hardness conclusion must therefore satisfy
\eqref{eq:population-necessary}.
Stronger approximation guarantees can enlarge the tractable
region~\cite[Theorem~6]{briest-goldberg-roglin}, so the displayed
restriction alone does not classify the remaining parameter pairs.

\paragraph{The quantitative gap.}
To compare the two bounds, fix the hard starting accuracy from
Section~\ref{subsec:hardness} and write
$\rho=(1-\eta)(1-\eps)$.
Since $\lfloor a\rfloor\ge a/2$ for $a\ge1$,
\eqref{eq:population-necessary} implies $p>1/(2\rho)$.
Thus, under deterministic PPAD-ETH, player counts supporting hardness
must grow at least linearly in $1/\rho$.
At this fixed starting accuracy, Proposition~\ref{prop:population}
provides a sufficient count
$\exp\!\bigl(O([1+\log(1/\rho)]^2)\bigr)$.
The comparison concerns amplification from a hard starting problem.
The sufficient bound itself holds for every fixed $\delta$.
Applying the necessary condition to this construction uses the
complexity hypothesis and the hard accuracy chosen in
Section~\ref{subsec:hardness}.

These estimates leave a gap between a necessary linear scale and a
sufficient quasipolynomial scale in $1/\rho$.
Reducing the sufficient count to a polynomial in $1/\rho$, and
ultimately determining whether a linear scale is attainable, are
natural quantitative objectives.
The factor $q^{-s(s-1)/2}$ in the proof of
Proposition~\ref{prop:population} records the accumulation of the
normalization factors across rounds. It produces the quadratic
exponent in $t$ in the present estimate.
Sharper control of these comparison errors, or a different
amplification construction, could reduce the sufficient player count.

\paragraph{A specified number of players.}
A complementary objective is to characterize the tractable and hard
pairs $(\eps,\eta)$ for each specified player count $p$.
Theorem~\ref{thm:main} permits $p$ to depend on both tolerances,
while Proposition~\ref{prop:population-necessary} identifies a
region of polynomial-time solvability for a given $p$.
Together, these results do not provide a complete classification as
a function of $p$, even in the ordinary approximate-equilibrium case
$\eta=0$.


\end{document}